\documentclass{amsart}

\pdfoutput=1

\usepackage[utf8x,utf8]{inputenc} 
\usepackage[T1]{fontenc}
\usepackage{graphicx}

\usepackage[vlined,linesnumbered,boxruled,norelsize]{algorithm2e}
\IncMargin{0.25cm}
\DontPrintSemicolon{}

\renewcommand{\vec}[1]{\boldsymbol{#1}}

\DeclareMathOperator{\E}{\mathbb{E}}
\newcommand{\subsetsum}{\textsc{Subset Sum}}
\newcommand{\modsubsetsum}{\textsc{Modular Subset Sum}}
\newcommand{\Z}{\mathbb{Z}}

\newtheorem{theorem}{Theorem}[section]
\newtheorem{lemma}[theorem]{Lemma}
\newtheorem{definition}[theorem]{Definition}
\newtheorem{proposition}[theorem]{Proposition}

\makeatletter
\newtheorem*{rep@theorem}{\rep@title}
\newcommand{\newreptheorem}[2]{%
\newenvironment{rep#1}[1]{%
 \def\rep@title{#2 \ref{##1}}%
 \begin{rep@theorem}[restated]}%
 {\end{rep@theorem}}}
\makeatother

\newreptheorem{theorem}{Theorem}
\newreptheorem{lemma}{Lemma}
\newreptheorem{proposition}{Proposition}

\begin{document}

\title[Space--Time Tradeoffs for Subset Sum]{Space--Time Tradeoffs for Subset Sum:\\ An Improved Worst Case Algorithm}

\author{Per Austrin}
\address{Per Austrin, Aalto Science Institute, Aalto University, Finland and
KTH Royal Institute of Technology, Sweden}
\author{Petteri Kaski}
\address{Petteri Kaski, HIIT \&
Department of Information and Computer Science,
Aalto University, Finland}
\author{Mikko Koivisto}
\address{Mikko Koivisto, HIIT \&
Department of Computer Science, University of Helsinki, Finland}
\author{Jussi Määttä}
\address{Jussi Määttä, HIIT \&
Department of Information and Computer Science,
Aalto University, Finland}

\thanks{P.A.~supported by the Aalto Science Institute, the Swedish Research Council grant 621-2012-4546, and ERC Advanced Investigator grant 226203.  P.K.~supported by the Academy of Finland, grants 252083 and 256287.  M.K.~supported by the Academy of Finland, grants 125637, 218153, and 255675.}

\begin{abstract}

The technique of Schroeppel and Shamir (SICOMP, 1981) has long been
the most efficient way to trade space against time for the
\subsetsum{} problem. In the random-instance setting, however,
improved tradeoffs exist.  In particular, the recently discovered
dissection method of Dinur et al.\ (CRYPTO 2012) yields a
significantly improved space--time tradeoff curve for instances with
strong randomness properties.  Our main result is that these strong
randomness assumptions can be removed, obtaining the same space--time
tradeoffs in the worst case.  We also show that for small space usage
the dissection algorithm can be almost fully parallelized.
Our strategy for dealing with arbitrary instances is to instead inject the randomness into the dissection process itself by working over a carefully selected but random composite modulus, and to introduce explicit space--time controls into the algorithm by means of a ``bailout mechanism''.
\end{abstract}

\maketitle
\thispagestyle{plain}

\section{Introduction}

The protagonist of this paper is the \subsetsum{} problem.

\begin{definition}
  An instance $(\vec{a}, t)$ of \subsetsum{} consists of a vector
  $\vec{a} \in \Z_{\geq 0}^n$ and a target $t \in \Z_{\geq 0}$.  
  A solution of $(\vec{a}, t)$ is a vector $\vec{x} \in \{0,1\}^n$ 
  such that $\sum_{i=1}^{n} a_i x_i = t$.
\end{definition}

The problem is NP-hard (in essence, Karp's formulation of the knapsack problem~\cite{karp72}), and the fastest known algorithms take time and space that grow exponentially in $n$. We will write $T$ and $S$ for the exponential factors and omit the possible polynomial factors. 
The brute-force algorithm, with $T=2^n$ and $S=1$, was beaten four decades ago, when Horowitz and Sahni~\cite{HorowitzSahni74} gave a simple yet powerful meet-in-the-middle algorithm that achieves $T = S = 2^{n/2}$ by halving the set arbitrarily, sorting the $2^{n/2}$ subsets of each half, and then quickly scanning through the relevant pairs of subsets that could sum to the target. Some years later, Schroeppel and Shamir~\cite{SchroeppelShamir81} improved the space requirement of the algorithm to $S = 2^{n/4}$ by designing a novel way to list the half-sums in sorted order in small space. However, if allowing only polynomial space, no better than the trivial time bound of $T = 2^n$ is known. Whether the constant bases of the exponentials in these bounds can be improved is a major open problem in the area of moderately exponential algorithms~\cite{Woeginger08}.  

The difficulty of finding faster algorithms, whether in polynomial or exponential space, has motivated the study of space--time tradeoffs. From a practical point of view, large space usage is often the bottleneck of computation, and savings in space usage can have significant impact even if they come at the cost of increasing the time requirement. This is because a smaller-space algorithm can make a better use of fast cache memories and, in particular, because a smaller-space algorithm often enables easier and more efficient large-scale parallelization. Typically, one obtains a smooth space--time tradeoff by combining the fastest exponential time algorithm with the fastest polynomial space algorithm into a hybrid scheme that interpolates between the two extremes. An intriguing question is then whether one can beat the hybrid scheme at some point, that is, to get a faster algorithm at some space budget---if one can break the hybrid bound somewhere, maybe one can break it everywhere. For the \subsetsum{} problem, a hybrid scheme is obtained by first guessing some $g$ elements of the solution, and then running the algorithm of Schroeppel and Shamir for the remaining instance on $n-g$ elements. This yields $T = 2^{(n+g)/2}$ and $S = 2^{(n-g)/4}$, for any $ 0 \leq g \leq n$, and thereby the smooth tradeoff curve $S^2 T = 2^n$ for $1 \leq S \leq 2^{n/4}$. We call this the {\em Schroeppel--Shamir tradeoff}.

While the Schroeppel--Shamir tradeoff has remained unbeaten in the
usual worst-case sense, there has been remarkable recent progress in
the random-instance setting~\cite{Howgrave-GrahamJoux10,BeckerCoronJoux11,DinurDunkelmanKellerShamir12}.
In a recent result, Dinur, Dunkelman, Keller, and 
Shamir~\cite{DinurDunkelmanKellerShamir12} gave a tradeoff curve that
matches the Schroeppel--Shamir tradeoff at the extreme points $S = 1$
and $S = 2^{n/4}$ but is strictly better in between. 
The tradeoff is achieved by a novel {\em dissection} method that
recursively decomposes the problem into smaller subproblems in two
different ``dimensions'', the first dimension being the current subset
of the $n$ items, and the other dimension being (roughly speaking) the
bits of information of each item.
The algorithm of Dinur et al.~runs in space
$S = 2^{\sigma n}$ and time $T = 2^{\tau(\sigma)n}$ on random
instances ($\tau(\sigma)$ is defined momentarily).  See
Figure~\ref{fig:tradeoff} for an illustration and comparison to the
Schroeppel--Shamir tradeoff.  The tradeoff curve $\tau(\sigma)$ is
piecewise linear and determined by what Dinur et al.~call the ``magic
sequence'' $2, 4, 7, 11, 16, 22, \ldots$, obtained as evaluations of
$\rho_{\ell} = 1 + \ell(\ell+1)/2$ at $\ell = 1, 2, \ldots$.

\begin{figure}[t!]
\begin{center}
\hspace{-1.4in}
{\scriptsize 
\input{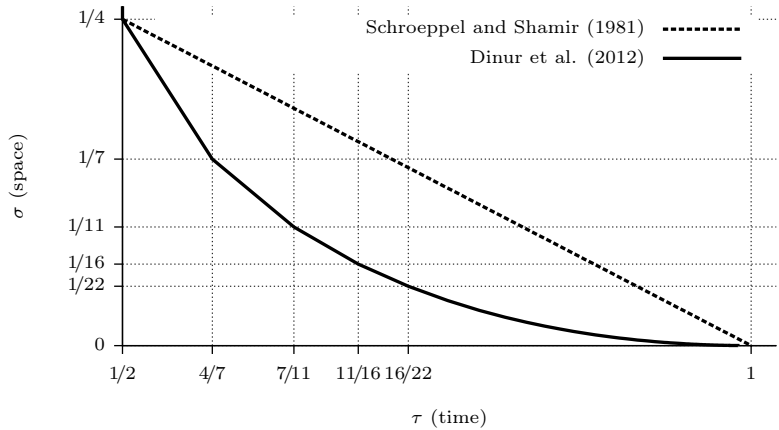}
}
\vspace*{-0.1in}
\caption{Space--time tradeoff curves for the \subsetsum{} prob\-lem~\mbox{\cite{SchroeppelShamir81,DinurDunkelmanKellerShamir12}}. The space and time requirements are $S = 2^{\sigma n}$ and $T = 2^{\tau n}$, omitting factors polynomial in the instance size $n$. }
\label{fig:tradeoff}
\end{center}
\end{figure}

\begin{definition}
  Define $\tau: (0,1] \rightarrow [0,1]$ as follows.  For $\sigma \in (0,1/2]$,
  let $\ell$ be the solution to $1/\rho_{\ell+1} < \sigma \le 1/\rho_{\ell}$.
  Then
  \begin{equation}\label{eq:tau}
	\tau(\sigma) =  1 - \frac{1}{\ell+1} - \frac{\rho_{\ell} - 2}{\ell+1}\sigma \,.
  \end{equation}
  If there is no such $\ell$, that is, if $\sigma > 1/2$, 
  define $\tau(\sigma) = 1/2$.
\end{definition}

\noindent
For example, at $\sigma = 1/8$, we have $\ell = 3$, and thereby
$\tau(\sigma) = 19/32$.   Asymptotically, when
$\sigma$ is small, $\ell$ is essentially $\sqrt{2/\sigma}$ and
$\tau(\sigma) \approx 1-\sqrt{2\sigma}$.

In this paper, we show that this space--time tradeoff result by Dinur et al.~\cite{DinurDunkelmanKellerShamir12} can be made to hold also in the worst case:
\begin{theorem}
  \label{thm:main}
For each $\sigma\in(0,1]$ there exists a randomized algorithm that solves the \subsetsum{} problem with high probability, and runs 
in $O^*(2^{\tau(\sigma) n})$ time and $O^*(2^{\sigma n})$ space.  The $O^*$ notation suppresses factors that are polynomial in $n$, and the polynomials depend on $\sigma$.
\end{theorem}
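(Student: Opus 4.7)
The plan is to take the dissection algorithm of Dinur et al., which only achieves its claimed tradeoff on random instances, and transform it into a worst-case algorithm by two simple mechanisms. First, I would inject randomness into the dissection itself by running it modulo a random composite integer $M = m_1 m_2 \cdots m_k$ drawn from a carefully designed distribution. Second, I would equip each intermediate list construction inside the dissection with a \emph{bailout}: as soon as a list exceeds its budgeted expected size by a polynomial factor, the entire execution is aborted and restarted with fresh randomness.

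To reduce to the modular setting it suffices to pick $M$ exceeding $\sum_i a_i$, since then the \subsetsum{} instance $(\vec{a},t)$ is equivalent to the \modsubsetsum{} instance $(\vec{a} \bmod M,\, t \bmod M)$. The factorization $M = m_1 \cdots m_k$ must respect the dissection levels, with $\log_2 m_j \approx \alpha_j n$ for exponents $\alpha_j$ dictated by the target point $\sigma$ and the magic sequence $\rho_\ell$; the dissection then proceeds exactly as in Dinur et al., matching partial solutions modulo successive prefixes $m_1,\, m_1 m_2,\, \ldots,\, m_1 \cdots m_k$. The crucial feature is that I would draw each $m_j$ uniformly from a suitable range of primes, so that for an arbitrary worst-case vector $\vec{a}$ the images $\vec{a} \bmod m_j$ behave sufficiently like uniform random vectors in $\Z_{m_j}^n$. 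On a single run the bailout enforces the claimed worst-case bounds: every intermediate list has size at most $O^*(2^{\sigma n})$ by construction, and the total work before either completing or aborting is $O^*(2^{\tau(\sigma) n})$.

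What remains is to argue that, for \emph{every} input $\vec{a}$, a random choice of the moduli causes a single run to complete without aborting with at least inverse-polynomial probability; polynomially many independent restarts then succeed with high probability. I expect this to be the main obstacle. The natural approach is a second-moment argument on each list $L$ appearing in the recursion: fix the items to be summed and the target residue class, let $|L|$ count the matching partial sums, and bound $\mathrm{Var}(|L|)$ by exploiting the fact that sums of two distinct subsets of a fixed integer vector differ by a nonzero integer, which is therefore nonzero modulo a uniformly random prime $m_j$ of the prescribed magnitude except with a $1/\mathrm{poly}(n)$ failure probability. Pairwise near-uncorrelatedness then yields concentration of $|L|$ around its expectation, and a union bound over the polynomially many list constructions in a single recursion tree gives the desired non-abort probability. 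This analysis replaces the random-instance assumption used by Dinur et al., and delivers the same tradeoff curve $\tau(\sigma)$ in the worst case.
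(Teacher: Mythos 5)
Your high-level plan matches the paper's: inject randomness via a random composite modulus with a divisor structure mirroring the dissection levels, and enforce the resource bounds with a bailout. But there are three concrete gaps, the first two of which are fatal to the argument as written. First, you omit the isolation preprocessing. The paper explicitly requires the input to be reduced to an instance with $O(1)$ integer solutions and $\log t = O(n)$ before the dissection is even invoked (Theorem~\ref{thm:preprocessing}), and this is load-bearing: at a node $v$ deep in the recursion with modulus $M_v$, the partial-solution count splits into ``true'' solutions (those agreeing with a global solution over $\Z$) and ``spurious'' ones (those that collide only modulo $M_v$). The spurious ones are controlled probabilistically by the random modulus, but the true ones are deterministic and their number is essentially the number of integer solutions of the original instance. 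Without the isolation step, this deterministic term is unbounded and no choice of moduli can save you. Relatedly, your second-moment sketch rests on the claim that ``sums of two distinct subsets of a fixed integer vector differ by a nonzero integer,'' which is simply false in the worst case (take $a_1 = a_2$, or any additive collision), and it is exactly this failure mode that the preprocessing neutralizes.

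Second, your ``abort the entire execution and restart with fresh randomness'' mechanism does not work, and the remark that a ``union bound over the polynomially many list constructions'' suffices is mistaken: the outer loop over $s'$ in each recursive call ranges over roughly $2^{\beta n}$ values, so the recursion tree contains exponentially many list constructions. Even with sharp per-list tail bounds, demanding that no list anywhere in the full recursion overshoots will have exponentially small (not inverse-polynomial) success probability on adversarial inputs. The paper's bailout is deliberately local and non-aborting: an overfull list is simply truncated and the iteration proceeds; one then proves that no bailout occurs on the single $O(1)$-size subtree (one fixed choice of $s'$ at each level) that would lead to the planted solution $\vec{x}^*$, via a first-moment (Markov) bound per node and a union bound over those $O(1)$ nodes. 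Finally, as a smaller matter, you lean on the heuristic that $\vec{a}\bmod m_j$ ``behaves like a uniform random vector''; the paper avoids this unproved (and in the worst case false) assumption entirely, arguing instead about the probability that a random prime divides a fixed nonzero integer of at most $O(n)$ bits.
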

\noindent 

To the best of our knowledge, Theorem~\ref{thm:main} is the first 
improvement to the Schroeppel--Shamir tradeoff in the worst-case setting. 
Here we should remark that, in the random-instance setting, there are results
that improve on both the Schroeppel--Shamir and the Dinur et al. 
tradeoffs for certain specific choices of the space budget $S$. 
In particular, Becker et al.~give a $2^{0.72n}$ time polynomial space
algorithm and a $2^{0.291n}$ time exponential space 
algorithm~\cite{BeckerCoronJoux11}.  
A natural question that remains is whether 
these two results could be extended to the worst-case setting.
Such an extension would be a significant breakthrough
(cf.~\cite{Woeginger08}).

We also prove that the dissection algorithm lends itself to
parallelization very well.  As mentioned before, a general
guiding intuition is that algorithms that use less space can be more
efficiently parallelized.  The following theorem shows that, at least
in the case of the dissection algorithm, this intuition can be made
formal: the smaller the space budget $\sigma$ is, the closer we can
get to full parallelization.

\begin{theorem}
  \label{thm:parallel dissect}
  The algorithm of Theorem~\ref{thm:main} can be implemented to run 
  in\linebreak
  $O^*(2^{\tau(\sigma) n} / P)$ parallel time on $P$ processors each
  using $O^*(2^{\sigma n})$ space, provided $P \le
  2^{(2\tau(\sigma)-1)n}$.
\end{theorem}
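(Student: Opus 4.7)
The plan is to factor the algorithm of Theorem~\ref{thm:main} into an outer enumeration whose iterations can be dispatched to independent processors, together with an inner sequential subroutine that each processor executes privately in its $O^*(2^{\sigma n})$ memory. With $\ell$ chosen as in the definition of $\tau$, the dissection algorithm guesses a tuple of candidate intermediate modular sums at its outermost levels and then runs a Schroeppel--Shamir-style inner search on the reduced instance for each such guess.

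The sizes of these two phases are already implicit in the analysis underlying Theorem~\ref{thm:main}: the total time $O^*(2^{\tau(\sigma)n})$ factors as $O^*(2^{(2\tau(\sigma)-1)n})$ outer tuples multiplied by $O^*(2^{(1-\tau(\sigma))n})$ inner time per tuple, with every inner search using only $O^*(2^{\sigma n})$ space. Given this factoring, parallelization is immediate: draw the random composite modulus and any additional shared randomness once and broadcast it, fix a canonical indexing of the outer tuples, assign processor $p \in \{0,1,\ldots,P-1\}$ the $p$th contiguous block of $\lceil 2^{(2\tau(\sigma)-1)n}/P\rceil$ tuples, and let each processor execute the inner search on its own tuples in its private memory. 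The resulting parallel time is $O^*(2^{\tau(\sigma)n}/P)$, plus a final $O(\log P)$-depth OR reduction to aggregate whether any processor found a solution.

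The step that warrants the most care is verifying that the outer enumeration can be indexed and executed without inter-processor coordination. For this I would inspect the recursive structure of the dissection to confirm that every outer level iterates over an explicitly indexable set (a range of modular residues, together with the finite set of bailout decisions introduced in the proof of Theorem~\ref{thm:main}) and that the inner Schroeppel--Shamir computation is self-contained, so that failed guesses do not affect subsequent ones. The side condition $P \le 2^{(2\tau(\sigma)-1)n}$ then records precisely the amount of outer parallelism available: once $P$ saturates the number of outer tuples, each processor is left with a single atomic inner search whose apparent sequentiality---the priority-queue based Schroeppel--Shamir procedure is not known to admit small-space parallelization---prevents further speedup, which is exactly why the stated range of $P$ is the natural one.
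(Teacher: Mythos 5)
The proposal asserts that the dissection algorithm factors cleanly into an outer loop of $O^*(2^{(2\tau(\sigma)-1)n})$ independent ``tuples'' followed by an inner Schroeppel--Shamir search of time $O^*(2^{(1-\tau(\sigma))n})$ per tuple. That factoring is never established, and it does not actually describe the algorithm. The recursion tree $\dissecttree(\sigma)$ has many levels of nested for-loops over $s'$, of sizes $\Theta(2^{\gamma_v n})$ that differ from node to node, and the running time analysis (Definition~\ref{def:F recurrence}) involves a $\max$ over the left/right branches at each level, not a flat product of an enumeration count times a base-case cost. In particular the top-level outer loop has only $2^{\beta n}$ iterations with $\beta = 1-\tau(\sigma)-\sigma$, which is strictly smaller than $2\tau(\sigma)-1$ once $\sigma$ is small, so the claimed $2^{(2\tau(\sigma)-1)n}$ ``outer tuples'' simply aren't available at any single level. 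Likewise, the right recursive call is not in general a leaf-level Schroeppel--Shamir call: $\sigma_r = \sigma/\tau(\sigma)$ is still below $1/4$ for small $\sigma$, so the ``inner search'' is itself another dissection call with its own nested loops.

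The paper's proof works precisely because it threads the parallelism recursively through this structure: at each internal node it distributes the processors over the $2^{\beta n}$ choices of $s'$; when $P$ exceeds that count, the $P' \approx P/2^{\beta n}$ surplus processors each \emph{redundantly} recompute the entire left recursive call (so each processor has its own lookup table in private memory), and only the right recursive call is parallelized, recursively, with $P'$ processors. The bound $P \le 2^{(2\tau(\sigma)-1)n}$ then arises from telescoping the inequalities
\[
\max\bigl\{2^{\tau(\sigma_l)n_l},\, 2^{\tau(\sigma_r)n_r}/P'\bigr\} \le 2^{\tau(\sigma)n}/P
\quad\text{and}\quad
P' \le 2^{(2\tau(\sigma_r)-1)n_r}
\]
down the right spine, using $\tau(\sigma_r) = (2\tau(\sigma)-1+\sigma)/\tau(\sigma)$ from Proposition~\ref{prop:tau recurrence}. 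Your proposal contains neither the redundant-left-branch idea (which is what lets each processor stay within $O^*(2^{\sigma n})$ private space without sharing the lookup table) nor the recursive telescoping that actually produces the threshold $2^{(2\tau(\sigma)-1)n}$; it asserts the final bound from a factorization that has no direct counterpart in the algorithm's structure. You would need to supply both of these ingredients for the proof to go through.
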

\noindent 
When $\sigma$ is small, $\tau(\sigma) \approx 1 - \sqrt{2 \sigma}$ and
the bound on $P$ is roughly $2^{(\tau(\sigma) - \sqrt{2\sigma})n}$.
In other words we get a linear speedup almost all the way up to
$2^{\tau(\sigma)n}$ processors, almost full parallelization.

\subsection{Our contributions and overview of the proof.} 
At a high level, our approach will follow the Dinur et al. dissection 
framework, with essential differences in preprocessing and low-level 
implementation to alleviate the assumptions on randomness. 
In particular, while we split the instance analogously to Dinur
et al. to recover the tradeoff curve, we require more careful control
of the sub-instances beyond just subdividing the bits of the input integers
and assuming that the input is random enough to guarantee sufficient
uniformity to yield the tradeoff curve. Accordingly we find it convenient 
to revisit the derivation of the tradeoff curve and the analysis of the 
basic dissection framework to enable a self-contained exposition. 

In contrast with Dinur et al., our strategy for dealing with arbitrary 
instances is, essentially, to instead inject the required randomness 
into the dissection process itself.
We achieve this by observing that dissection can be carried out over
any algebraic structure that has a sufficiently rich family of
homomorphisms to enable us to inject entropy by selection of {\em
 random} homomorphisms, while maintaining an appropriate recursive
structure for the selected homomorphisms to facilitate dissection. For
the \subsetsum{} problem, in practice this means reduction from
$\mathbb{Z}$ to $\mathbb{Z}_M$ over a composite $M$ with a carefully
selected (but random) lattice of divisors to make sure that we can
still carry out recursive dissections analogously to Dinur et al.
This approach alone does not provide sufficient control over an
arbitrary problem instance, however.

The main obstacle is that, even with the randomness injected into the
algorithm, it is very hard to control the resource consumption of the
algorithm.  To overcome this, we add explicit resource controls into
the algorithm, by means of a somewhat cavalier
``bailout mechanism'' which causes the algorithm to simply stop when
too many partial solutions have been generated.
We set the threshold for such a bailout to be roughly the number of
partial solutions that we would have expected to see in a random
instance.  This allows us to keep its running time and space usage in
check, perfectly recovering the Dinur et al.\ tradeoff curve.  
The remaining challenge is then to prove correctness, i.e., that these 
thresholds for bailout are high enough so that no hazardous bailouts 
take place and a solution is indeed found.
To do this we perform a localized analysis on the
subtree of the recursion tree that contains a solution.  Using that
the constructed modulus $M$ contains a lot of randomness (a
consequence of the density of the primes), we can show that the
probability of a bailout in any node of this subtree is $o(1)$, 
meaning that the algorithm finds a solution with high probability.

A somewhat curious effect is that in order for our analysis to go
through, we require the original \subsetsum{} instance to have few, say
$O(1)$, distinct solutions.  In order to achieve this, we preprocess
the instance by employing routine isolation techniques in
$\mathbb{Z}_P$ {\em but implemented over} $\mathbb{Z}$ to control the
number of solutions over $\mathbb{Z}$. The reason why we need to
implement the preprocessing over $\mathbb{Z}$ rather than than work 
in the modular setting is that the dissection algorithm itself needs 
to be able to choose a modulus $M$ very carefully to deliver the tradeoff, 
and that choice is incompatible with having an extra prime $P$ 
for isolation.
This is somewhat curious because, intuitively, the more solutions an 
instance has, the easier it should be to find one.  The reason why that 
is not the case in our setting is that, further down in the recursion tree, 
when operating with a small modulus $M$, every original solution gives 
rise to many additional spurious solutions, and if there are too many original
solutions there will be too many spurious solutions.

A further property needed to support the analysis is that the numbers 
in the \subsetsum{} instance must not be too large, in particular we 
need $\log t = O(n)$.  This we can also achieve by a simple preprocessing 
step where we hash down modulo a random prime, but again with implementation 
over the integers for the same reason as above.

\subsection{Related work.}
The \subsetsum{} problem has recently been approached from related angles, with the interest in small space. Lokshtanov and Nederlof~\cite{LokshtanovNederlof10} show that the well-known pseudo-polynomial-time dynamic programming algorithm can be implemented in truly-polynomial space by algebraization. Kaski, Koivisto, and Nederlof~\cite{KaskiKoivistoNederlof12} note that the sparsity of the dynamic programming table can be exploited to speedup the computations even if allowing only polynomial space.

Smooth space--time tradeoffs have been studied also for several other hard problems. Bj\"{o}rklund et al.~\cite{BjorklundHusfeldtKaskiKoivisto08} derive a hybrid scheme for the Tutte polynomial that is a host of various counting problems on graphs. Koivisto and Parviainen~\cite{KoivistoParviainen10} consider a class of permutation problems (including, e.g., the traveling salesman problem and the feedback arc set problem) and show that a natural hybrid scheme can be beaten by a partial ordering technique. 

\subsection{Organization.}

In Section~\ref{sec:dissection algorithm} we describe the dissection
algorithm and give the main statements about its properties.  In
Section~\ref{sec:time space analysis} we show that the algorithm runs
within the desired time and space bounds.
Then, in Section~\ref{sec:correctness} we show that given a
\subsetsum{} instance with at most $O(1)$ solutions, the dissection
algorithm finds a solution.  In Section~\ref{sec:isolation}
we give a standard isolation argument reducing general \subsetsum{} to
the restricted case when there are at most $O(1)$ solutions, giving
the last puzzle piece to complete the proof of Theorem~\ref{thm:main}.
In Section~\ref{sec:parallel} we show that the algorithm lends
itself to efficient parallelization by proving
Theorem~\ref{thm:parallel dissect}.

\section{The Main Dissection Algorithm}
\label{sec:dissection algorithm}

Before describing the main algorithm, we condense 
some routine preprocessing steps into the following theorem, whose proof
we relegate to Section~\ref{sec:isolation}. 

\begin{theorem}
\label{thm:preprocessing}
  There is a polynomial-time randomized algorithm for preprocessing instances of \subsetsum{} which, given as input an instance $(\vec{a}, t)$ with $n$ elements, outputs a collection of $O(n^3)$ instances $(\vec{a}', t')$, each with $n$ elements and $\log t' = O(n)$,
  such that if $(\vec{a}, t)$ is a NO instance then so are all the 
  new instances with probability $1-o(1)$, 
  and if $(\vec{a}, t)$ is a YES instance then with 
  probability $\Omega(1)$ at least one of the new instances is 
  a YES instance with at most $O(1)$ solutions.
\end{theorem}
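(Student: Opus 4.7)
The plan is to prove Theorem~\ref{thm:preprocessing} by composing two independent randomized reductions, each of the ``hash modulo a random prime and enumerate the carry'' form that is alluded to in the introduction. Each stage produces $O(n)$ instances, with Stage~2 applied on top of Stage~1 and an extra guess of $\log_2 K$ multiplying out to $O(n^3)$ in total.

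\emph{Stage~1 (target shrinking).} I would pick a uniformly random prime $P$ with $\log P = \Theta(n)$, set $a_i^{(1)} = a_i \bmod P$, and for each $k_1 \in \{0, 1, \ldots, n\}$ form the target $t_{k_1}^{(1)} = (t \bmod P) + k_1 P$. Because $\sum_i a_i^{(1)} x_i \in [0, nP]$, any original solution $x$ is a solution of exactly one $(\vec{a}^{(1)}, t_{k_1}^{(1)})$, so YES-ness is preserved. For NO preservation, the bad event is that some $x$ has $D(x) := \sum_i a_i x_i - t \ne 0$ yet $P \mid D(x)$; since $|D(x)|$ has at most $O(n)$ prime divisors in the standard $2^{O(n)}$-bit input regime, a union bound over the $2^n$ candidate $x$'s together with the prime number theorem gives that a random $\Theta(n)$-bit prime avoids all such differences with probability $1-o(1)$.

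\emph{Stage~2 (isolation).} For each guess $j \in \{0, 1, \ldots, n\}$ of $\lceil \log_2 K \rceil$, where $K$ is the number of solutions of the relevant Stage-1 instance, I would pick a random prime $Q_j \in [2^{j+1}, 2^{j+2}]$, random weights $w_i^{(j)} \in [Q_j]$, a random shift $s_j \in [Q_j]$, and a separator $Q_j' = (n+1) Q_j$. Form $a_i^{(2,j)} = a_i^{(1)} Q_j' + w_i^{(j)}$ and, for each $k_1$ and each $k_2 \in \{0, 1, \ldots, n\}$, target $t^{(2)}_{j,k_1,k_2} = t_{k_1}^{(1)} Q_j' + s_j + k_2 Q_j$. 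The no-carry gap $Q_j' \ge (n+1) Q_j$ ensures that solutions of the output instance are exactly vectors $x$ solving $(\vec{a}^{(1)}, t_{k_1}^{(1)})$ with $\sum_i w_i^{(j)} x_i = s_j + k_2 Q_j$, so NO-ness projects down trivially and the integer solution count of the output equals the count in a single modular residue class. For the correct $j^* = \lceil\log_2 K\rceil$, the hash $x \mapsto \sum_i w_i^{(j^*)} x_i \bmod Q_{j^*}$ is pairwise independent on $\{0,1\}^n$ with range $\Theta(K)$, and a standard Paley--Zygmund/Chebyshev computation (mean $K/Q_{j^*} = \Theta(1)$, variance $O(K/Q_{j^*})$) gives $\Pr[\text{exactly one Stage-1 solution hashes to } s_{j^*}] = \Omega(1)$; on that event the unique $k_2^*$ determined by the isolated solution produces an output instance with exactly one solution.

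Putting the pieces together, the total instance count is $(n+1)^3 = O(n^3)$, and every output target satisfies $\log t' = O(n)$ by construction (each factor being multiplied is $2^{O(n)}$). The randomness of the two stages is independent, so the NO-preservation probability $1-o(1)$ from Stage~1 and the YES-isolation probability $\Omega(1)$ from Stage~2 compose to give the theorem's guarantees. The main obstacle I expect is the Stage-2 isolation analysis: although the no-carry gap makes the modular-to-integer bookkeeping essentially automatic, one still has to carry out the two-moment computation that yields a singleton residue class with constant probability, and one must absorb the unknown value $K$ by enumerating $j$ over the $O(n)$ possible scales. A secondary, purely quantitative obstacle is to choose the absolute constants in $\log P$ and $\log Q_j$ large enough that the union bounds in both stages absorb the relevant polylogarithmic factors while keeping $\log t' = O(n)$.
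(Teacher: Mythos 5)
Your two-stage preprocessing (random prime hashing of the target with carry enumeration, then Valiant--Vazirani-style isolation via a pairwise-independent hash, a separator, and further carry enumeration, with the scale of the prime guessed over $O(n)$ values) is essentially identical to the paper's proof of Theorem~\ref{thm:preprocessing}; the only differences are cosmetic, e.g.\ you place the hash in the low-order rather than high-order bits and aim for exactly one surviving solution where the paper settles for at most ten via the first and second moment methods. One small point the paper handles explicitly and your ``$2^{O(n)}$-bit regime'' remark elides is the corner case $\log(nt) > 2^n$, which the paper disposes of by brute-force enumeration before Stage~1 so that the prime-divisor counting in the NO-preservation bound goes through.
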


By applying this preprocessing we may assume
that the main algorithm receives an input  
$(\vec{a},t)$ that has $O(1)$ solutions and $\log t=O(n)$.
We then introduce a random modulus $M$ and transfer 
into a modular setting.

\begin{definition}
  An instance $(\vec{a}, t, M)$ of \modsubsetsum{} consists of a vector
  $\vec{a} \in \Z_{\geq 0}^n$, 
  a target $t \in \Z_{\geq 0}$, and a modulus $M\in\Z_{\geq 1}$.  
  A solution of $(\vec{a}, t, M)$ is a vector $\vec{x} \in \{0,1\}^n$ such
  that $\sum_{i=1}^{n} a_i x_i \equiv t\pmod M$.
\end{definition}

The reason why we transfer to the modular setting is that the
recursive dissection strategy extensively uses the fact that we have
available a sufficiently rich family of homomorphisms to split the
search space.  In particular, in the modular setting this corresponds
to the modulus $M$ being ``sufficiently divisible'' (in a sense to be
made precise later) to obtain control of the recursion.

Pseudocode for the main algorithm is given in Algorithm~\ref{alg:main}.
In addition to the modular instance $(\vec{a},t,M)$, the algorithm
accepts as further input the space parameter $\sigma\in(0,1]$. 

The key high-level idea in the algorithm is to ``meet in the middle'' by
splitting an instance of $n$ items to two sub-instances of $\alpha n$
items and $(1-\alpha)n$ items, guessing (over a smaller modulus $M'$
that divides $M$) what the sum should be after the first and before
the second sub-instance, and then recursively solving the two
sub-instances subject to the guess.
Figure~\ref{fig:algorithm-illustration} illustrates the structure of
the algorithm.

\begin{algorithm}[t]
\caption{\textsc{GenerateSolutions}$(\vec{a}, t, M, \sigma)$}
\label{alg:main}
\KwData{$(\vec{a}, t, M)$ is an $n$-element \modsubsetsum{} instance, 
$\sigma\in(0,1]$}
\KwResult{Iterates over up to $\Theta^*(2^n/M)$ solutions of $(\vec{a}, t, M)$ 
while using space $O^*(2^{\sigma n})$}
\Begin{
    \If{$\sigma \ge 1/4$}{
      Report up to $\Theta^*(2^n/M)$ solutions using the Shroeppel-Shamir algorithm\;
      \Return\;
    }
    Choose $\alpha \in (0,1), \beta \in (0,1)$ appropriately (according to Theorem~\ref{thm:time bound}) based on $\sigma$\;
    Let $M'$ be a factor of $M$ of magnitude $\Theta(2^{\beta n})$\;
    \For{$s'=0,1,\ldots,M'-1$     \label{step:s-loop} }{ 
      Allocate an empty lookup table\;
      Let $\vec{l} = (a_1, a_2, \ldots, a_{\alpha n})$ be the first $\alpha n$ items of $\vec{a}$\;
      Let $\vec{r} = (a_{\alpha n + 1}, a_{\alpha n + 2}, \ldots, a_{n})$ be the remaining $(1-\alpha) n$ items of $\vec{a}$\;
      \For{$\vec{y} \in \textsc{GenerateSolutions}(\vec{l}, s', M', \frac{\sigma}{\alpha})$}{ \label{step:firstrec}
        Let $s = \sum_{i=1}^{\alpha n} a_i y_i \bmod M$\;
        Store $[s \rightarrow \vec{y}]$ in the lookup table\;
      }
      \For{$\vec{z} \in \textsc{GenerateSolutions}(\vec{r}, t-s', M', \frac{\sigma}{1-\alpha})$}{ \label{step:secondrec}
        Let $s = t-\sum_{i=\alpha n+1}^{n} a_iz_i \bmod M$\;
        \ForEach{$[s \rightarrow \vec{y}]$ in the lookup table}{
          \label{step:comb-start}
          Report solution $\vec{x}=(\vec{y}, \vec{z})$\;
          \If{at least $\Theta^*(2^n/M)$ solutions reported}{
            Stop iteration and {\bf return}\; \label{step:bailout}
          }
          \label{step:comb-end}
        }
      }
      Release the lookup table\;
    }
}
\end{algorithm}

\begin{figure}
\begin{center}
\includegraphics[width=0.7\linewidth]{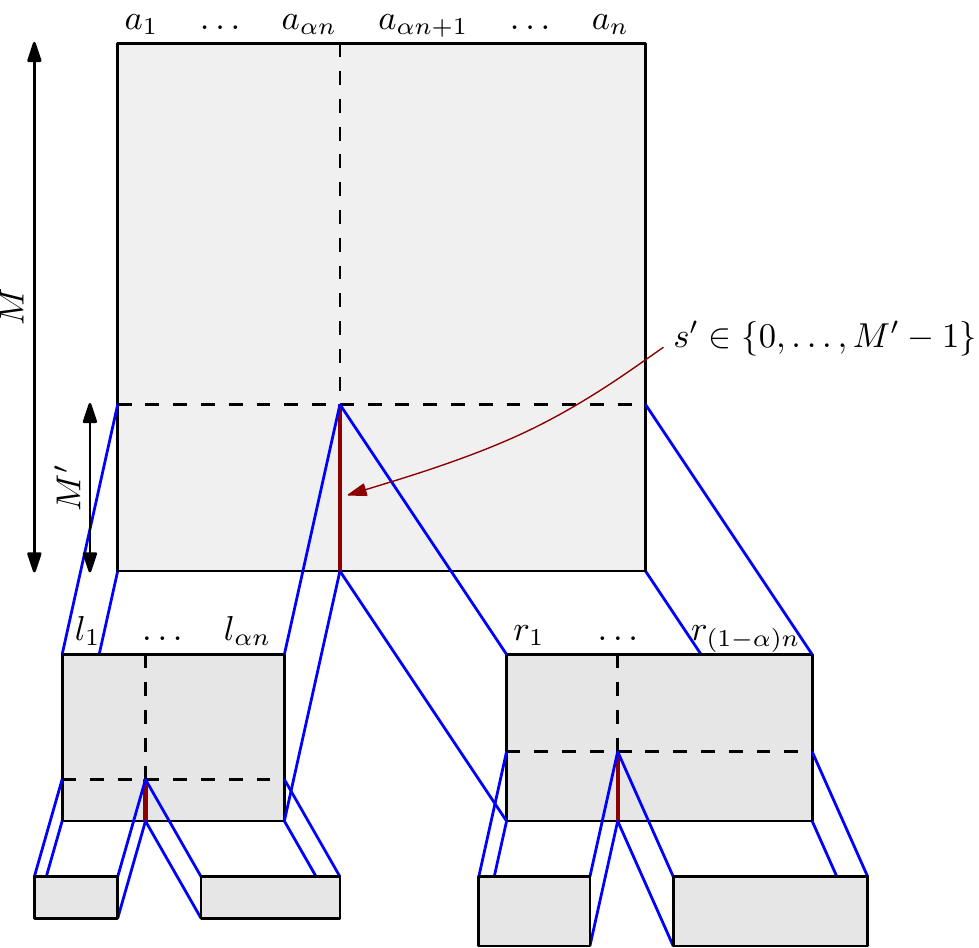}
\caption{Illustration of the recursive dissections made by the algorithm.} 
\label{fig:algorithm-illustration}
\end{center}
\end{figure}

We continue with some further high-level remarks.

\begin{enumerate}
\item In the algorithm, two key parameters $\alpha$ and
$\beta$ are chosen, which control how the \modsubsetsum{} instance is
subdivided for the recursive calls.  The precise choice of these
parameters is given in Theorem~\ref{thm:time bound} below,
but at this point the reader is
encouraged to simply think of them as some parameters which should be
chosen appropriately so as to optimize running time.

\item The algorithm also chooses a factor $M'$ of $M$ 
  such that $M'=\Theta(2^{\beta n})$. The existence of sufficient
  factors at all levels of recursion is established in 
  Section~\ref{sec:correctness}.

\item The algorithm should be viewed as an {\em iterator} over
  solutions.  In other words, the algorithm has an internal state, 
  and a {\em next item} functionality that we tacitly use by 
  writing a for-loop over all solutions generated by the algorithm,
  which should be interpreted as a short-hand for repeatedly asking 
  the iterator for the next item.

\item The algorithm uses a ``bailout mechanism'' to control the
  running time and space usage. Namely, each recursive call 
  will bail out after $\Theta^*(2^n/M)$ solutions are reported.
  (The precise bailout bound has a further multiplicative factor 
  polynomial in $n$ that depends on the top-level value of $\sigma$.)
  A preliminary intuition for the bound is that this is what one would
  expect to receive in a particular congruence class modulo $M$
  if the $2^n$ possible sums are randomly placed into the congruence classes.
\end{enumerate}

As a warmup to the analysis, let us first observe that, 
if we did not have the bailout step in line~\ref{step:bailout},
correctness of the algorithm would be more or less immediate: 
for any solution $\vec{x}$ of $(\vec{a}, t, M)$, let 
$s = \sum_{i=1}^{\alpha n} a_i x_i \bmod M$.  Then, when $s'=s \bmod{M'}$ 
in the outer for-loop (line \ref{step:s-loop}), by an inductive argument 
we will find $\vec{y}$ and $\vec{z}$ in the two separate recursive branches 
and join the two partial solutions to form $\vec{x}$.

The challenge, of course, is that without the bailout mechanism 
we lack control over the resource consumption of the algorithm.
Even though we have applied isolation to guarantee that there are not
too many solutions of the top-level instance $(\vec{a},t)$, it may be that
some branches of the recursion generate a huge number of solutions,
affecting both running time and space (since we store partial solutions 
in a lookup table).

Let us then proceed to analyzing the algorithm with the bailout
mechanism in place.  The two main claims are as follows.

\begin{theorem}
  \label{thm:time bound}
  Given a space budget $\sigma\in(0,1]$ and $M \ge 2^n$, if in each recursive step of Algorithm~\ref{alg:main} the parameters $\alpha$ and $\beta$ are chosen as
  \begin{align}
    \label{eqn:alpha beta choice}
    \alpha & = 1 - \tau(\sigma) &\text{and}&& \beta & = 1 - \tau(\sigma) - \sigma\,,
  \end{align}
  then the algorithm runs in $O^*(2^{\tau(\sigma) n})$ time and 
  $O^*(2^{\sigma n})$ space.
\end{theorem}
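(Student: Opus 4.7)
The plan is to argue both bounds by induction on the recursion depth, with the Schroeppel--Shamir branch ($\sigma \geq 1/4$, where $\tau(\sigma)=1/2$) as the base case. The space bound is straightforward: the lookup table at one invocation holds at most $\Theta^*(2^{\alpha n}/M') = \Theta^*(2^{(\alpha-\beta)n}) = \Theta^*(2^{\sigma n})$ entries, capped by the first recursive call's bailout. At most one lookup table per recursion level coexists (the two subcalls are issued sequentially, and the table is allocated and released inside every outer-loop iteration); each active subcall uses $O^*(2^{(\sigma/\alpha)\cdot \alpha n}) = O^*(2^{\sigma n})$ or $O^*(2^{(\sigma/(1-\alpha))(1-\alpha)n}) = O^*(2^{\sigma n})$ space by induction. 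Since the depth is bounded by a function of $\sigma$ alone, summing over levels still gives $O^*(2^{\sigma n})$.

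For time, I set up the recurrence
\begin{equation*}
T(n,\sigma,M) \;\leq\; M'\bigl[T(\alpha n,\,\sigma/\alpha,\,M') + T((1-\alpha)n,\,\sigma/(1-\alpha),\,M')\bigr] \;+\; O^*\!\bigl(2^{(1-\alpha)n}\bigr) \;+\; O^*\!\bigl(2^n/M\bigr),
\end{equation*}
where the $O^*(2^{(1-\alpha)n})$ absorbs the $M'\cdot\Theta^*(2^{(1-\alpha)n}/M')$ enumerations of $\vec{z}$ and their table lookups across the $M'$ outer iterations, and $O^*(2^n/M)$ accounts for the capped number of reported solutions. I will prove by induction the strengthened bound $T(n,\sigma,M) = O^*(2^{\tau(\sigma)n} + 2^n/M)$, which at the top level $M \geq 2^n$ delivers the theorem. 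The strengthening is needed because recursive subcalls receive a modulus $M' < 2^{n'}$, so their running time may be dominated by the bailout rather than by the $\tau$-curve.

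Substituting the inductive hypothesis and $M'=\Theta^*(2^{\beta n})$ into the recurrence, the inductive step reduces to verifying three inequalities on $\tau$: (i) $1-\alpha \le \tau(\sigma)$, which is equality by the very choice of $\alpha$; (ii) $\beta + (1-\alpha)\,\tau(\sigma/(1-\alpha)) \le \tau(\sigma)$ for the right branch; and (iii) $\beta + \alpha\,\tau(\sigma/\alpha) \le \tau(\sigma)$ for the left branch. For (ii) I will first check that as $\sigma$ ranges over $(1/\rho_{\ell+1}, 1/\rho_\ell]$ the quantity $\sigma/\tau(\sigma)$ ranges monotonically over $(1/\rho_\ell, 1/\rho_{\ell-1}]$, so that $\tau(\sigma/(1-\alpha))$ is a single linear piece in $\sigma$ on the interval; both sides of (ii) then become linear in $\sigma$ and equality at the two endpoints boils down to the magic-sequence identity $\rho_\ell - \rho_{\ell-1} = \ell$, yielding equality throughout. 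For (iii), $\sigma/\alpha$ may lie in several different pieces of $\tau$'s graph depending on $\ell$, but since $\alpha = 1 - \tau(\sigma) \le 1/2 \le \tau(\sigma)$ in the recursive regime $\sigma < 1/4$, the left branch is strictly smaller than the right one, and a case-by-case verification (again using the magic-sequence recursion) gives the inequality.

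The main obstacle is (iii): it does not collapse to a single clean identity the way (ii) does, so one is forced into a case analysis across the pieces of $\tau$ that $\sigma/\alpha$ can inhabit. A secondary concern is book-keeping the $\mathrm{poly}(n)$ factors hidden in $O^*$ through the $O(\ell(\sigma))$-deep recursion, but since the depth is a function of $\sigma$ alone these factors assemble into an $n^{O(1)}$ prefactor whose exponent depends on $\sigma$, exactly as the statement allows.
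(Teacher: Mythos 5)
Your plan is sound and parallels the structure of the paper's own proof: an inductive space bound driven by the bailout cap $2^{\alpha n}/M' = 2^{\sigma n}$, a recurrence for the running time, and verification of the two branch inequalities on $\tau$. Two comparative remarks are worth making. First, for the time analysis the paper introduces a stripped-down \textsc{DummyDissection} recursion whose running time defines a helper function $F(\sigma)$, shows the real algorithm matches the dummy whenever $M \ge 2^{(1-F(\sigma))n}$ at every node (a condition it then verifies inductively), and proves $F(\sigma)=\tau(\sigma)$ as a standalone lemma; your direct recurrence carrying the extra $2^n/M$ term reaches the same conclusion more compactly, and in fact one can check that $2^n/M \le 2^{\tau(\sigma)n}$ holds at every recursive node given $M\ge 2^n$ at the top (this is exactly the paper's modulus check), so the extra term is a safety net rather than a necessity. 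Second, and more substantively, the case analysis you anticipate for inequality (iii) is unnecessary: in the recursive regime $\sigma<1/4$ one has $\alpha=1-\tau(\sigma)<1/2<1-\alpha$, hence $\sigma/\alpha \ge \sigma/(1-\alpha)$, and since $\tau$ is non-increasing this gives $\tau(\sigma/\alpha)\le\tau(\sigma/(1-\alpha))$, whence $\beta+\alpha\,\tau(\sigma/\alpha) < \beta+(1-\alpha)\,\tau(\sigma/(1-\alpha)) = \tau(\sigma)$ follows directly from (ii) with no case splitting. For (ii) itself, the paper derives the identity $\tau(\sigma)\tau(\sigma/\tau(\sigma)) = 2\tau(\sigma)-1+\sigma$ (Proposition~\ref{prop:tau recurrence}) and reads off equality in one line; your linearity-plus-endpoints argument amounts to the same thing, since $\tau(\sigma)\tau(\sigma/\tau(\sigma))$ is indeed piecewise linear on the $\rho_\ell$-intervals, but the algebraic identity is the cleaner route and is also reused in the paper's correctness and parallelization sections. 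The space analysis matches the paper's.
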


\begin{theorem}
  \label{thm:no critical bailouts}
  For every $\sigma\in(0,1]$  
  there is a randomized algorithm that runs in time polynomial
  in $n$ and chooses a top-level modulus $M\geq 2^n$ so that 
  Algorithm~\ref{alg:main} reports a solution of the non-modular
  instance $(\vec{a},t)$ with high probability over the choices of $M$, 
  assuming that at least one and at most $O(1)$ solutions exist
  and that $\log t = O(n)$.
\end{theorem}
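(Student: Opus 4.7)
My plan is to prove Theorem~\ref{thm:no critical bailouts} by building $M$ as a carefully chosen random product of primes and then performing a localised analysis on the subtree of the recursion tree that corresponds to a fixed integer solution $\vec{x}^*$ of $(\vec{a},t)$.

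\textbf{Construction of $M$.} Since $\sigma$ scales up by $1/\alpha$ or $1/(1-\alpha)$ at every recursive call and the base case $\sigma \ge 1/4$ is reached after finitely many steps, for every fixed $\sigma$ the recursion tree has $O(1)$ nodes. This in turn prescribes $O(1)$ required moduli $M^{(v)}$, one per node $v$, of specified magnitudes $\Theta(2^{c_v n})$ and respecting the divisibility relation $M^{(v')} \mid M^{(v)}$ whenever $v'$ is a child of $v$. I would take $M = P_1 \cdots P_K$ with $K = O(1)$, where each $P_j$ is sampled independently and uniformly from the primes in a pairwise disjoint interval $[A_j, 2A_j]$ with $\log_2 A_j = \Theta(n)$; the intervals $A_j$ and the subsets $S_v \subseteq \{1,\ldots,K\}$ (with $M^{(v)} = \prod_{j \in S_v} P_j$) are set so that $\prod_{j \in S_v} A_j = \Theta(2^{c_v n})$ for every node $v$ and $M \ge 2^n$, which is a feasible $O(1)$-sized constraint system.

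\textbf{Core probabilistic bound.} For any fixed nonzero integer $u$ with $|u| \le 2^{O(n)}$ and any required sub-modulus $M^{(v)}$, I claim $\Pr_M[M^{(v)} \mid u] = O^*(1/M^{(v)})$. A random prime chosen uniformly from $[A, 2A]$ divides $u$ with probability at most $O^*(1/A)$, because by the prime number theorem there are $\Omega(A/\log A)$ primes in the interval while $u$ has only $O(1)$ prime divisors of size $\ge A$. Independence of the $P_j$ then yields the product bound $\prod_{j \in S_v} O^*(1/A_j) = O^*(1/M^{(v)})$.

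\textbf{Analysis on the solution subtree.} Fix an integer solution $\vec{x}^*$ of $(\vec{a},t)$ and track it through the recursion. At each node $v$ on the resulting subtree there is a specific partial sum $s_v^* = \sum_i a_i^{(v)} (x_v^*)_i$, which is an integer of size $2^{O(n)}$ (using $\log t = O(n)$) and \emph{does not depend on $M$}, since the subdivision of items is deterministic. The crucial observation is that any $\vec{x}' \in \{0,1\}^{n_v}$ is a modular solution of the sub-instance at $v$ if and only if $M^{(v)} \mid u_{\vec{x}'}$, where $u_{\vec{x}'} := \sum_i a_i^{(v)} x_i' - s_v^*$ is an $M$-independent integer of magnitude $2^{O(n)}$ (the modular target at $v$ is $s_v^* \bmod M^{(v)}$, so the congruence reduces to $M^{(v)} \mid u_{\vec{x}'}$). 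Combined with the core bound,
\begin{equation*}
  \mathbb{E}_M[N_v] \le R_v + (2^{n_v} - R_v) \cdot O^*(1/M^{(v)}),
\end{equation*}
where $R_v = |\{\vec{x}': \sum_i a_i^{(v)} x_i' = s_v^*\}|$. Using the hypothesis that $(\vec{a},t)$ has at most $c = O(1)$ integer solutions, each $\vec{x}' \in R_v$ extends (by appending the components of $\vec{x}^*$ outside $v$) to an integer solution of $(\vec{a},t)$, and distinct $\vec{x}'$'s yield distinct extensions; hence $R_v \le c = O(1)$ and $\mathbb{E}_M[N_v] = O^*(2^{n_v}/M^{(v)})$. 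Markov's inequality followed by a union bound over the $O(1)$ nodes of the solution subtree gives that with probability $1-o(1)$ no bailout occurs on the subtree, and the inductive correctness argument given in the discussion preceding Theorem~\ref{thm:time bound} then ensures that $\vec{x}^*$ is eventually reported.

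\textbf{Main obstacle.} The crux is the observation in the solution-subtree analysis that, although $t^{(v)}$ depends on $M$, the modular membership condition rewrites as the $M$-independent divisibility $M^{(v)}\mid u_{\vec{x}'}$; this is what makes the core probabilistic bound applicable uniformly over all $\vec{x}'$, and what makes the $O(1)$-solution hypothesis exactly what is needed to control $R_v$. Supporting this, some care is required in the construction step to realise the prescribed divisor lattice with sufficient independence among the primes to justify the product bound.
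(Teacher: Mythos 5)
Your proposal is correct and follows essentially the same route as the paper's own proof: you build $M$ from independent random primes drawn from disjoint dyadic intervals to realize the $O(1)$-sized divisor lattice prescribed by the dissection tree (this is the paper's Lemma~\ref{lem:m-prime}), then fix an integer solution $\vec{x}^*$, localize to its recursion subtree, split the solutions at each node into the $O(1)$ that agree with $\vec{x}^*$'s integer partial sum and the spurious ones bounded via $\Pr[M^{(v)}\mid u]=O^*(1/M^{(v)})$, and finish with Markov plus a union bound over the $O(1)$ nodes. The only difference is that the paper spells out the interval choices via the $\gamma_v$ quantities and their successive differences, which you leave as a feasibility claim, but the underlying idea is identical.
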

\noindent We prove Theorem~\ref{thm:time bound} in Section~\ref{sec:time space
  analysis} and Theorem~\ref{thm:no critical bailouts} in
Section~\ref{sec:correctness}.

Let us however here briefly discuss the specific choice of $\alpha$
and $\beta$ in Theorem~\ref{thm:time bound}.  We arrived at
\eqref{eqn:alpha beta choice} by analyzing the recurrence relation
describing the running time of Algorithm~\ref{alg:main}.
Unfortunately this recurrence in its full form is somewhat
complicated, and our process of coming up with \eqref{eqn:alpha beta
  choice} involved a certain amount of experimenting and guesswork.
We do have some guiding (non-formal) intuition which might be
instructive:
\begin{enumerate}
\item One needs to make sure that $\alpha - \beta \le \sigma$.  This
  is because for a random instance, the left subinstance is expected
  to have roughly $2^{(\alpha - \beta)n}$ solutions, and since we need 
  to store these there had better be at most $2^{\sigma n}$ of them.
\item Since $\beta \ge \alpha - \sigma$ and $\beta$ has a very direct
  impact on running time (due to the $2^{\beta n}$ time outer loop),
  one will typically want to set $\alpha$ relatively small.  The
  tension here is of course that the smaller $\alpha$ becomes, the
  larger $1-\alpha$ (that is, the size of the right subinstance) becomes.
\item Given this tension, setting $\alpha - \beta = \sigma$ is
  natural.
\end{enumerate}

So in an intuitive sense, the bottleneck for space comes from the left
subinstance, or rather the need to store all the solutions found for the left subinstance (this is not technically true since we give the right
subinstance $2^{\sigma n}$ space allowance as well), whereas the
bottleneck for time comes from the right subinstance, which tends to
be much larger than the left one.

\section{Analysis of Running Time and Space Usage}
\label{sec:time space analysis}

In this section we prove Theorem~\ref{thm:time bound} giving the
running time upper bound on the dissection algorithm.  For this, it is
convenient to define the following function, which is less explicit
than $\tau$ but more naturally captures the running time of the
algorithm.

\begin{definition}
  \label{def:F recurrence}
  Define $F: (0,1] \rightarrow (0,1)$ by 
  the following recurrence for $\sigma < 1/4$:
  \begin{equation}
  \label{eqn:F defn}
  F(\sigma) = \beta + \max \Big\{\alpha F\Big(\frac{\sigma}{\alpha}\Big), (1-\alpha) F\Big(\frac{\sigma}{1-\alpha}\Big)\Big\}\,,
  \end{equation}
  where $\alpha = 1 - \tau(\sigma)$ and $\beta = \alpha - \sigma$.
  The base case is $F(\sigma) = 1/2$ for $\sigma \ge 1/4$.
\end{definition}

To analyze the running time of the dissection algorithm, let us first
define a ``dummy'' version of Algorithm~\ref{alg:main}, 
given as Algorithm~\ref{alg:dummy}.
The dummy version is a bare bones version of Algorithm~\ref{alg:main}
which generates the same recursion tree.

\begin{algorithm}[t]
  \caption{\textsc{DummyDissection}$(n, \sigma)$}
  \label{alg:dummy}
  \KwData{$\sigma\in(0,1]$}
    \Begin{
        \If{$\sigma \ge 1/4$}{
          Run for $2^{n/2}$ steps\;
          \Return\;
        }
        Let $\alpha = 1 - \tau(\sigma)$, $\beta = \alpha - \sigma$\;
        \For{$2^{\beta n}$ steps}{
          $\textsc{DummyDissection}(\alpha n, \sigma / \alpha)$\;
          $\textsc{DummyDissection}((1-\alpha) n, \sigma / (1-\alpha))$\;
        }
      }
\end{algorithm}
The following lemma is immediate from the definition of $F(\sigma)$.

\begin{lemma}
  \label{lemma:time bound for dummy dissection}
  Algorithm~\ref{alg:dummy} runs in 
  $O^*(2^{F(\sigma) n})$ time on input $(n, \sigma)$.
\end{lemma}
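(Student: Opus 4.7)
The plan is a direct induction on recursion depth matching the recurrence in Definition~\ref{def:F recurrence} to the structure of Algorithm~\ref{alg:dummy}. First I would let $T(n,\sigma)$ denote the running time of the algorithm on input $(n,\sigma)$ and read off a recurrence directly from the pseudocode. The base case gives $T(n,\sigma) = O^*(2^{n/2})$ whenever $\sigma \ge 1/4$, matching $F(\sigma) = 1/2$, and for $\sigma < 1/4$ the outer loop contributes a factor of $2^{\beta n}$ multiplying the cost of the two recursive calls, yielding
\[
T(n,\sigma) \le 2^{\beta n}\bigl(T(\alpha n,\sigma/\alpha) + T((1-\alpha)n,\sigma/(1-\alpha))\bigr) + \mathrm{poly}(n),
\]
with exactly the same $\alpha = 1-\tau(\sigma)$ and $\beta = \alpha - \sigma$ as in \eqref{eqn:F defn}.

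Next I would confirm that the recursion terminates after a bounded number of levels, so that accumulated polynomial overheads stay polynomial in $n$. Since $\alpha,1-\alpha\in(0,1)$ whenever $\sigma<1/4$, both $\sigma/\alpha$ and $\sigma/(1-\alpha)$ strictly exceed $\sigma$, and a short monotonicity argument shows that along every root-to-leaf path the value of $\sigma$ reaches the base-case threshold $1/4$ after a number of steps depending only on the initial $\sigma$, not on $n$. Hence the recursion tree has bounded depth $d_\sigma$ and a bounded total number of nodes, with the bound depending only on the top-level $\sigma$.

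With these two ingredients the induction is essentially a transcription of \eqref{eqn:F defn}. Assuming inductively that $T(m,\sigma') \le p_{\sigma'}(m)\cdot 2^{F(\sigma')m}$ for the two recursive subproblems, the recurrence collapses to
\[
T(n,\sigma) \le q_\sigma(n)\cdot 2^{\beta n + \max\{\alpha F(\sigma/\alpha),\ (1-\alpha)F(\sigma/(1-\alpha))\}\,n} = q_\sigma(n)\cdot 2^{F(\sigma)n},
\]
where $q_\sigma$ is a polynomial whose degree and coefficients may depend on $\sigma$ but not on $n$. I do not expect any substantive obstacle; the only bookkeeping points are rounding non-integer arguments such as $\alpha n$, which contributes only subpolynomial slack, and tracking the dependence of $q_\sigma$ on $\sigma$, which is allowed by the $O^*$ convention just as in Theorem~\ref{thm:main}.
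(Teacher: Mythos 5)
Your argument is correct and is essentially what the paper intends; the paper's own ``proof'' is the single remark that the lemma is immediate from the definition of $F$, and your induction just spells this out, matching the outer loop to the factor $2^{\beta n}$ and the two recursive calls to the two terms inside the $\max$.

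One small imprecision worth tightening is the bounded-depth step. Strict increase of $\sigma$ alone (i.e., the bare fact that $\sigma/\alpha > \sigma$ and $\sigma/(1-\alpha) > \sigma$) does not by itself guarantee that the recursion reaches $\sigma \ge 1/4$ in finitely many steps, since a strictly increasing sequence need not cross a fixed threshold. What actually gives depth $O(1)$ is a quantitative version: in a left branch, $\sigma/\alpha > 2\sigma$ because $\alpha = 1-\tau(\sigma) < 1/2$ for $\sigma < 1/4$; and in a right branch, $\sigma/(1-\alpha) = \sigma/\tau(\sigma)$ moves $\sigma$ from the interval $(1/\rho_{\ell+1}, 1/\rho_{\ell}]$ into $(1/\rho_{\ell}, 1/\rho_{\ell-1}]$ by Proposition~\ref{prop:mu-prime}, decreasing the index $\ell$ by exactly one. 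This is what the paper packages as Lemma~\ref{lemma:dissection depth}, and once it is in hand the rest of your induction goes through exactly as you wrote it.
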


Next, we can relate the running time of the dummy algorithm to the
running time of the actual algorithm.  Ignoring polynomial factors 
such as those arising from updating the lookup table, the only
time-consuming step of Algorithm~\ref{alg:main} that we have omitted
in Algorithm~\ref{alg:dummy} is the combination loop in
steps~\ref{step:comb-start} to~\ref{step:comb-end}. 
The total amount of time spent in this loop in any fixed recursive
call is, by virtue of step~\ref{step:bailout}, at most $O^*(2^n/M)$.
So if $M \ge 2^{(1-F(\sigma))n}$ then this time is dominated by 
the run time from the recursive calls. In other words:

\begin{lemma}
  \label{lemma:dummy dissection simulates real}
  Consider running Algorithm~\ref{alg:main} on input $(\vec{a}, t, M,
  \sigma)$.  If in every recursive call made it holds that $M \ge
  2^{(1-F(\sigma))n}$ then the running time is within a polynomial
  factor of the running time of Algorithm~\ref{alg:dummy} on input
  $(n, \sigma)$, that is, at most $O^*(2^{F(\sigma)n})$.
\end{lemma}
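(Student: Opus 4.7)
The plan is to compare Algorithm \ref{alg:main} against Algorithm \ref{alg:dummy} node-by-node along their common recursion tree, and argue that the only work Algorithm \ref{alg:main} performs on top of what Algorithm \ref{alg:dummy} already accounts for is the combination loop, which the bailout mechanism forces to be no larger (up to polynomial factors) than the dummy's work at the same node. First I would formalize that both algorithms induce the same recursion tree $\mathcal{T}$: at a node with parameters $(n, \sigma)$ with $\sigma < 1/4$, both branch into an outer loop of length $\Theta(2^{\beta n})$ (for Algorithm \ref{alg:main} the outer loop length $M'=\Theta(2^{\beta n})$, for Algorithm \ref{alg:dummy} it is literally $2^{\beta n}$), and inside each iteration both make one recursive call on $(\alpha n, \sigma/\alpha)$ and one on $((1-\alpha)n, \sigma/(1-\alpha))$. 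Let $n_v,\sigma_v,M_v$ denote the parameters of node $v\in\mathcal{T}$ in Algorithm \ref{alg:main}.

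Next I would audit the per-node work of Algorithm \ref{alg:main}. At node $v$, beyond making the recursive calls themselves, the algorithm performs: allocating/releasing the lookup table, computing sums $\sum a_i y_i \bmod M$ and $t - \sum a_i z_i \bmod M$, and the combination loop in lines \ref{step:comb-start}–\ref{step:comb-end}. The first three cost $\mathrm{poly}(n_v)$ time per partial solution produced by the recursive calls and can therefore be folded into a polynomial overhead on top of the recursive time. The combination loop is qualitatively different: its cost is proportional to the number of completed solutions it reports, and the bailout in line \ref{step:bailout} caps this total (across the entire call at $v$, not per outer-loop iteration) at $O^*(2^{n_v}/M_v)$. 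Invoking the hypothesis $M_v\geq 2^{(1-F(\sigma_v))n_v}$ gives
\begin{equation*}
\text{(combination work at }v\text{)} \;\le\; O^*(2^{n_v}/M_v) \;\le\; O^*(2^{F(\sigma_v)n_v}).
\end{equation*}

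With this in hand I would close the argument by induction on $\mathcal{T}$ from the leaves upward, showing $T(v) \le \mathrm{poly}(n_v)\cdot 2^{F(\sigma_v) n_v}$, where $T(v)$ is the total time at the subtree rooted at $v$. The base case $\sigma_v \ge 1/4$ matches because Schroeppel--Shamir runs in $O^*(2^{n_v/2})$ and $F(\sigma_v)=1/2$. For the inductive step, writing
\begin{equation*}
  T(v) \;\le\; \mathrm{poly}(n_v)\!\cdot 2^{\beta_v n_v}\bigl(T(\text{left child}) + T(\text{right child})\bigr) + O^*(2^{n_v}/M_v),
\end{equation*}
the recursive part is bounded by $\mathrm{poly}(n_v)$ times the dummy recurrence for $D(n_v,\sigma_v)$, which Lemma \ref{lemma:time bound for dummy dissection} controls by $O^*(2^{F(\sigma_v)n_v})$, and the combination term was just bounded by the same quantity. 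Applying this at the root with $(n_v,\sigma_v)=(n,\sigma)$ gives the stated $O^*(2^{F(\sigma)n})$ bound.

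The only thing requiring care, and the closest thing to an obstacle, is ensuring that the polynomial overhead accumulated along the recursion does not blow up past any polynomial in $n$. This follows because the recursion depth is bounded: $\sigma$ is multiplied by at least $1/\max(\alpha,1-\alpha) > 1$ at each level and the base case $\sigma \ge 1/4$ is reached after at most $O(\log(1/\sigma))$ levels, a constant depending only on the top-level $\sigma$. Thus the polynomial overheads compose to a polynomial in $n$, which is absorbed by the $O^*$ notation, completing the proof.
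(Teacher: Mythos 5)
Your proof is correct and follows essentially the same route as the paper's: you observe that the only work in Algorithm~\ref{alg:main} beyond what Algorithm~\ref{alg:dummy} already accounts for is the combination loop, that the bailout caps its total cost per call at $O^*(2^n/M)$, and that the hypothesis $M \ge 2^{(1-F(\sigma))n}$ makes this cost dominated by $O^*(2^{F(\sigma)n})$. You simply spell out the induction on the recursion tree and the constant-depth argument that the paper leaves implicit.
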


The next key piece is the following lemma, stating that the function
$F$ is nothing more than a reformulation of $\tau(\sigma)$.
We defer the proof to Section~\ref{sec:tau leq F proof}.

\begin{lemma}
  \label{lemma:tau leq F}
  For every $\sigma \in (0, 1]$ it holds that $F(\sigma) = \tau(\sigma)$.
\end{lemma}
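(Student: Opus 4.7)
The plan is to prove the equality by induction on the ``interval index'' $\ell$ such that $\sigma \in (1/\rho_{\ell+1}, 1/\rho_\ell]$, tracking the piecewise linear structure of $\tau$. The base case is $\sigma \ge 1/4$: here $F(\sigma) = 1/2$ by definition, and checking formula \eqref{eq:tau} on $(1/2, 1]$, on $(1/4, 1/2]$ with $\ell = 1$, and at $\sigma = 1/4$ with $\ell = 2$ yields $\tau(\sigma) = 1/2$ as well.

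For the inductive step, fix $\sigma \in (1/\rho_{\ell+1}, 1/\rho_\ell]$ with $\ell \ge 2$, and set $\alpha = 1 - \tau(\sigma)$, $\beta = \alpha - \sigma$, $\sigma_L = \sigma/\alpha$, and $\sigma_R = \sigma/(1-\alpha) = \sigma/\tau(\sigma)$. My first task is to locate the recursive sub-arguments. Using the closed-form identity $\tau(1/\rho_\ell) = \rho_{\ell-1}/\rho_\ell$ (which follows from $\rho_\ell - \ell = \rho_{\ell-1}$), the endpoints $\sigma = 1/\rho_\ell$ and $\sigma = 1/\rho_{\ell+1}$ map respectively to $\sigma_R = 1/\rho_{\ell-1}$ and $\sigma_R = 1/\rho_\ell$. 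Monotonicity of $\sigma/\tau(\sigma)$ then yields $\sigma_R \in (1/\rho_\ell, 1/\rho_{\ell-1}]$, that is, precisely one interval higher. A similar computation gives $\sigma_L \in (1/(\ell+1), 1/\ell]$, which lies in strictly earlier intervals as well. Hence the induction hypothesis applies to both sub-arguments, yielding $F(\sigma_L) = \tau(\sigma_L)$ and $F(\sigma_R) = \tau(\sigma_R)$.

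The algebraic heart of the proof is the identity
$$ \beta + (1-\alpha)\,\tau(\sigma_R) = \tau(\sigma). $$
Since $\sigma_R$ lies in interval $\ell - 1$, I substitute $\tau(\sigma_R) = 1 - 1/\ell - (\rho_{\ell-1} - 2)\sigma_R/\ell$ with $\sigma_R = \sigma/(1-\alpha)$, and after clearing the factor $1-\alpha$ the equation collapses to $(\ell+1)\alpha = 1 + (\rho_\ell - 2)\sigma$, which is precisely the defining relation of $\alpha = 1 - \tau(\sigma)$. The bookkeeping uses the easy identity $\ell + \rho_{\ell-1} - 2 = \rho_\ell - 2$, equivalently $\rho_\ell - \rho_{\ell-1} = \ell$. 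I expect this step to be the main source of algebraic friction, though no cleverness is required.

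To conclude, I must verify that the right branch attains the maximum in the recurrence \eqref{eqn:F defn}, i.e.\ $\alpha\,\tau(\sigma_L) \le (1-\alpha)\,\tau(\sigma_R)$. This follows from two elementary monotonicity observations. First, $\tau \ge 1/2$ everywhere forces $\alpha \le 1/2 \le 1-\alpha$, so $\sigma_L \ge \sigma_R$. Second, $\tau$ is non-increasing, because each linear piece has non-positive slope and the pieces agree at the breakpoints (by the identity $\tau(1/\rho_\ell) = \rho_{\ell-1}/\rho_\ell$ applied to both sides). Hence $\tau(\sigma_L) \le \tau(\sigma_R)$, and multiplying these non-negative inequalities gives the required bound. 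Combined with the identity of the previous paragraph, the recurrence then evaluates to $F(\sigma) = \beta + (1-\alpha)\,\tau(\sigma_R) = \tau(\sigma)$, closing the induction.
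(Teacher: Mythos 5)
Your proposal is correct and takes essentially the same route as the paper: induction on the interval index $\ell$, showing the right branch of the recurrence attains the maximum and equals $\tau(\sigma)$, with the left branch bounded via $\alpha \le 1-\alpha$ and monotonicity of $\tau$. The only difference is cosmetic --- you carry out the supporting algebra inline (the $\tau$-recurrence identity, the monotonicity of $\sigma/\tau(\sigma)$, and the non-increase of $\tau$), whereas the paper delegates these facts to Propositions~\ref{prop:mu-prime} and~\ref{prop:tau recurrence}.
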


Equipped with this lemma, we are in good shape 
to prove Theorem~\ref{thm:time bound}.

\begin{reptheorem}{thm:time bound}
  Given a space budget $\sigma\in(0,1]$ and $M \ge 2^n$, if in each recursive step of Algorithm~\ref{alg:main} the parameters $\alpha$ and $\beta$ are chosen as
  \begin{align}
    \alpha & = 1 - \tau(\sigma) &\text{and}&& \beta & = 1 - \tau(\sigma) - \sigma\,,
  \end{align}
  then the algorithm runs in $O^*(2^{\tau(\sigma) n})$ time and 
  $O^*(2^{\sigma n})$ space.
\end{reptheorem}

\begin{proof}[Proof of Theorem~\ref{thm:time bound}]
  Let us start with space usage.  There are three items to bound: (1)
  the space usage in the left branch (step~\ref{step:firstrec}), (2)
  the space usage in the right branch (step~\ref{step:secondrec}), and
  (3) the total number of solutions found in the left branch (as these
  are all stored in a lookup table).  For (1), the subinstance
  $(\vec{l},s',M')$ has $\alpha n$ items and has a space budget of
  $\sigma / \alpha$, so by an inductive argument it uses space
  $O(2^{\frac{\sigma}{\alpha} \alpha n}) = O(2^{\sigma n})$.  The case
  for (2) is analogous.  It remains to bound (3), which is clearly
  bounded by the number of solutions found in the recursive
  step~\ref{step:firstrec}.  However, by construction, this is 
  (up to a suppressed factor polynomial in $n$)
  at most $2^{\alpha n} / M' = O(2^{(\alpha - \beta)n}) = O(2^{\sigma n})$.

  We thus conclude that the total space usage of the algorithm is
  bounded by $O^*(d 2^{\sigma n})$ where $d$ is the recursion depth,
  which is $O(1)$ by Lemma~\ref{lemma:dissection depth}.

  Let us turn to time usage.
  First, to apply Lemma~\ref{lemma:dummy
    dissection simulates real}, we need to make sure that we always
  have $M \ge 2^{(1-F(\sigma))n} = 2^{(1-\tau(\sigma))n}$ in every
  recursive call.  In the top level call this is true since $M \ge
  2^n$.  Suppose (inductively) that it is true in some recursive call,
  and let us prove that it holds for both left- and right-recursive calls.
	We refer to the respective values of the parameters by adding subscripts $l$ and $r$.

  In a left-recursive call, we have $n_l = \alpha n$, $M_l = 2^{\beta
    n}$, and $\sigma_l = \sigma / \alpha$.  We thus need $2^{\beta n}
  \ge 2^{(1-\tau(\sigma/\alpha))\alpha n}$.  Noting that
  $1-\tau(\sigma/\alpha) \le 1/2$ and that $\beta \ge \alpha/2$ (this
  is equivalent to $\tau(\sigma) < 1-2\sigma$), we see that $M_l$ is
  sufficiently large.

  In a right-recursive call, we have $n_r = (1-\alpha) n = \tau(\sigma) n$,
  $M_r = 2^{\beta n}$, and $\sigma_r = \sigma / (1-\alpha) =
  \sigma/\tau(\sigma)$.  By Proposition~\ref{prop:tau recurrence},
  we have $1-\tau(\sigma_r) = (1 - \sigma -
    \tau(\sigma))/\tau(\sigma) = \beta/\tau(\sigma)$, from which we
  conclude that $M_r = 2^{(1-\tau(\sigma_r))n_r}$.

  Thus the conditions of Lemma~\ref{lemma:dummy
    dissection simulates real} are satisfied, and the running time bound
  of $O^*(2^{\tau(\sigma) n})$ for Algorithm~\ref{alg:main} is a direct
  consequence of Lemmata~\ref{lemma:time bound for dummy dissection}, \ref{lemma:dummy dissection simulates real}, and~\ref{lemma:tau leq F}.
\end{proof}

\subsection{Proof of Lemma~\ref{lemma:tau leq F}}
\label{sec:tau leq F proof}

We first prove some useful properties of the $\tau$ function.

\begin{proposition}
  \label{prop:mu-prime}
  The map $\sigma \mapsto \sigma/\tau(\sigma)$ is increasing in $\sigma \in (0, 1]$. Furthermore,
    for $\sigma = 1/\rho_{\ell+1}$, we have $\sigma/\tau(\sigma) = 1/\rho_{\ell}$.
\end{proposition}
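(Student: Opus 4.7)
My plan is to prove the two assertions in order, with the second one (the boundary value) furnishing the key identity that also underlies continuity of $\sigma \mapsto \sigma/\tau(\sigma)$ across pieces.

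First I would compute $\tau(1/\rho_{\ell+1})$ directly using~\eqref{eq:tau}. Since $\sigma = 1/\rho_{\ell+1}$ is the left endpoint of the interval $(1/\rho_{\ell+1}, 1/\rho_\ell]$, plugging in gives
\[
  \tau\!\left(\tfrac{1}{\rho_{\ell+1}}\right) \;=\; \frac{\ell}{\ell+1} - \frac{\rho_\ell - 2}{(\ell+1)\rho_{\ell+1}} \;=\; \frac{\ell\rho_{\ell+1} - \rho_\ell + 2}{(\ell+1)\rho_{\ell+1}} \,.
\]
The claim $\sigma/\tau(\sigma) = 1/\rho_\ell$ is then equivalent to $\ell\rho_{\ell+1} + 2 = (\ell+2)\rho_\ell$, which one checks by substituting $\rho_k = 1 + k(k+1)/2$ and expanding both sides to $\ell + 2 + \ell(\ell+1)(\ell+2)/2$. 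This same identity shows that the formula~\eqref{eq:tau} at index $\ell+1$ produces the same value at $\sigma = 1/\rho_{\ell+1}$, establishing that $\tau$ (and hence $\sigma/\tau(\sigma)$) is continuous at every breakpoint.

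For monotonicity I would work piece by piece. On the interval $(1/\rho_{\ell+1}, 1/\rho_\ell]$ one has $\tau(\sigma) = A_\ell - B_\ell \sigma$ with $A_\ell = \ell/(\ell+1) > 0$ and $B_\ell = (\rho_\ell - 2)/(\ell+1) \geq 0$, and consequently
\[
  \frac{d}{d\sigma}\!\left(\frac{\sigma}{\tau(\sigma)}\right) \;=\; \frac{A_\ell}{(A_\ell - B_\ell \sigma)^2} \;>\; 0 \,.
\]
On $(1/2,1]$ the function $\tau$ is the constant $1/2$, so $\sigma/\tau(\sigma) = 2\sigma$ is trivially increasing. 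Combined with the continuity across the breakpoints $\sigma = 1/\rho_\ell$ (a consequence of the identity verified above), this upgrades piecewise monotonicity to monotonicity on all of $(0,1]$.

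There is no real obstacle beyond keeping the bookkeeping of the indices clean: the only nontrivial ingredient is the algebraic identity $\ell \rho_{\ell+1} + 2 = (\ell+2)\rho_\ell$, and it is easily verified by direct expansion of $\rho_k = 1 + k(k+1)/2$. I expect the entire proof to fit in a short paragraph, with the identity doing double duty for both statements.
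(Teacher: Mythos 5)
Your proof is correct and follows essentially the same piecewise argument as the paper: both establish monotonicity on each interval $(1/\rho_{\ell+1}, 1/\rho_\ell]$ and verify the breakpoint values by the same algebra (the paper shows $\tau(\sigma)/\sigma$ is decreasing on each piece, whereas you differentiate $\sigma/\tau(\sigma)$ directly, a cosmetic difference). You are slightly more careful in spelling out continuity of $\tau$ at the breakpoints via the identity $\ell\rho_{\ell+1}+2=(\ell+2)\rho_\ell$, which the paper implicitly relies on when it evaluates the index-$\ell$ formula at $\sigma = 1/\rho_{\ell+1}$ even though that point belongs to the index-$(\ell+1)$ piece of the definition of $\tau$.
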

\begin{proof}
  Let $\sigma \in (0, 1]$, and let $\sigma' = \sigma/\tau(\sigma)$.
    If $\sigma > 1/2$, then $\tau(\sigma) = 1/2$, and thus $\sigma' =  2\sigma$ is increasing in $\sigma$.
    Otherwise $1/\rho_{\ell+1} < \sigma \leq 1/\rho_{\ell}$ for some $\ell \geq 1$, and $\tau(\sigma) = (\ell - (\rho_{\ell}-2)\sigma)/(\ell+1)$. Thus  
    \begin{equation*}
      \frac{1}{\sigma'} = \frac{\tau(\sigma)}{\sigma} = \frac{\ell - (\rho_{\ell} - 2)\sigma}{(\ell+ 1)\sigma} = \frac{\ell/\sigma - \rho_{\ell} + 2}{\ell+1}\,,
    \end{equation*}
    from which it follows that $\sigma'$ is increasing in $\sigma$ in the interval $(1/\rho_{\ell+1}, 1/\rho_{\ell}]$. 

      Suppose $\sigma = 1/\rho_{\ell+1}$. Use first $\rho_{\ell+1} = \rho_{\ell} + \ell +1$ and then $\ell(\ell+1) = 2(\rho_{\ell} -1)$ to obtain 
      \[
      \frac{1}{\sigma'} = 
      \frac{\ell(\rho_{\ell}+\ell+1) - \rho_{\ell} + 2}{\ell+1} = \frac{(\ell-1)\rho_{\ell} + 2(\rho_{\ell}-1) + 2}{\ell+1} = \rho_{\ell}\,. \qedhere
      \]
\end{proof}

\begin{proposition}
  \label{prop:tau recurrence}
  Let $\sigma \in (0, 1]$. If $\sigma > 1/2$, then $\tau(\sigma) = 1/2$, and otherwise
	\[
	\tau(\sigma) = \frac{1-\sigma}{2-\tau(\sigma/\tau(\sigma))}\,.
	\] 
\end{proposition}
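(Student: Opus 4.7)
My plan is to verify the identity by direct substitution into the piecewise linear formula for $\tau$, after first pinning down on which piece of $\tau$ the ratio $\sigma' := \sigma/\tau(\sigma)$ lives. The case $\sigma > 1/2$ is immediate from the definition of $\tau$, so the work is for $\sigma \le 1/2$. There I let $\ell \ge 1$ be the unique index with $1/\rho_{\ell+1} < \sigma \le 1/\rho_{\ell}$, so that $\tau(\sigma) = (\ell - (\rho_\ell - 2)\sigma)/(\ell+1)$.

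The first key step is to locate $\sigma'$. Proposition~\ref{prop:mu-prime} already tells me that $\sigma \mapsto \sigma'$ is increasing and sends the left endpoint $1/\rho_{\ell+1}$ to $1/\rho_{\ell}$. A parallel calculation, using the identity $(\ell-1)\rho_\ell + 2 = (\ell+1)\rho_{\ell-1}$ (which follows directly from $\rho_\ell = 1 + \ell(\ell+1)/2$), will show that the right endpoint $\sigma = 1/\rho_{\ell}$ is sent to $1/\rho_{\ell-1}$. Hence $\sigma' \in (1/\rho_\ell, 1/\rho_{\ell-1}]$, which for $\ell \ge 2$ places $\sigma'$ on the $(\ell-1)$-piece of $\tau$ and gives the formula $\tau(\sigma') = ((\ell-1) - (\rho_{\ell-1}-2)\sigma')/\ell$.

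The second step is then mechanical algebra. Introducing the abbreviation $D := \ell - (\rho_\ell - 2)\sigma$ (so that $\tau(\sigma) = D/(\ell+1)$ and $\sigma' = (\ell+1)\sigma/D$), I substitute the above formula into $2 - \tau(\sigma')$, clear the denominator $D$, and use the ``magic'' telescoping $\rho_\ell - \rho_{\ell-1} = \ell$ to collapse the numerator $D + (\rho_{\ell-1} - 2)\sigma$ to $\ell(1-\sigma)$. This yields $2 - \tau(\sigma') = (\ell+1)(1-\sigma)/D$ and therefore $(1-\sigma)/(2-\tau(\sigma')) = D/(\ell+1) = \tau(\sigma)$, as claimed.

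The remaining case $\ell = 1$ lies on the flat Schroeppel--Shamir piece where $\tau \equiv 1/2$; I would check the identity at the breakpoint $\sigma = 1/4$ directly ($\sigma' = 1/2$, $\tau(\sigma') = 1/2$, both sides equal $1/2$), and observe that it is only this breakpoint value that the analysis of Theorem~\ref{thm:time bound} actually invokes, since the recursion fires only for $\sigma < 1/4$. The main obstacle is simply keeping the algebra organized in the second step; once the two recurrences $\rho_\ell - \rho_{\ell-1} = \ell$ and $(\ell-1)\rho_\ell + 2 = (\ell+1)\rho_{\ell-1}$ are in hand, everything telescopes cleanly.
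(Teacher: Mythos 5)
Your argument is correct, and for $\ell \ge 2$ it follows the same route as the paper: locate $\sigma' = \sigma/\tau(\sigma)$ in $(1/\rho_\ell, 1/\rho_{\ell-1}]$ via monotonicity and an endpoint computation, substitute the formula $\tau(\sigma') = \bigl((\ell-1) - (\rho_{\ell-1}-2)\sigma'\bigr)/\ell$, and telescope with $\rho_\ell - \rho_{\ell-1} = \ell$; the algebra matches the paper's line-by-line up to a harmless rearrangement of which identity is invoked first.

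What you do differently, and what it buys you, is your explicit isolation of the $\ell = 1$ case. The paper's proof writes ``Fix $\sigma \le 1/2$ and $\ell \ge 1$'' and applies $\tau(\sigma') = \bigl((\ell-1) - (\rho_{\ell-1}-2)\sigma'\bigr)/\ell$ uniformly, but for $\ell = 1$ this gives the linear continuation $\tau(\sigma') = \sigma'$ on $(1/2,1]$, whereas the defined value there is the constant $1/2$. Consequently the proposition as stated is in fact false on $(1/4, 1/2]$: at $\sigma = 1/3$ one has $\tau(\sigma) = 1/2$, $\sigma' = 2/3$, $\tau(\sigma') = 1/2$, and $(1-\sigma)/(2-\tau(\sigma')) = 4/9 \ne 1/2$. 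You correctly sense this and carve out $\ell = 1$ as a special case, noting it is never invoked; every downstream application (the induction step of Lemma~\ref{lemma:tau leq F} and the right-recursive bound in Theorem~\ref{thm:time bound}) only needs $\sigma < 1/4$, which is $\ell \ge 2$, where your argument is airtight. One small imprecision in your write-up: you place the breakpoint $\sigma = 1/4$ inside the $\ell = 1$ case, but by the convention $1/\rho_{\ell+1} < \sigma \le 1/\rho_\ell$ the value $\sigma = 1/4 = 1/\rho_2$ belongs to $\ell = 2$, so it is already covered by your $\ell \ge 2$ argument; the cleanest formulation of your remark is simply ``for $\sigma \in (1/4,1/2]$ the identity may fail, but the proposition is only ever applied with $\sigma \le 1/4$''.
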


\begin{proof}
  The case $\sigma > 1/2$ is obvious.  Fix $\sigma \le 1/2$ and $\ell
  \ge 1$ such that $1/\rho_{\ell+1} < \sigma \leq 1/\rho_{\ell}$ and let
  $\sigma' = \sigma/\tau(\sigma)$.  By Proposition~\ref{prop:mu-prime}
  we have that $1/\rho_{\ell} < \sigma' \le 1/\rho_{\ell-1}$.
  Using $\tau(\sigma') = (\ell-1 - (\rho_{\ell-1}-2)\sigma')/\ell$ we obtain
  \[2-\tau(\sigma') = \frac{\ell+1 + (\rho_{\ell-1}-2)\sigma'}{\ell}\,.
  \]
  Plugging in
  $\sigma' = \sigma/\tau(\sigma)$ and using $\rho_{\ell-1} =
  \rho_{\ell}-\ell$ gives
  \begin{equation}
    \label{eqn:2 minus tau sigma prime}
    2-\tau(\sigma') = \frac{(\ell+1) \tau(\sigma) + (\rho_{\ell}-\ell-2)\sigma}{\ell\tau(\sigma)}\,.
  \end{equation}
  As $\tau(\sigma) = (\ell - (\rho_{\ell}-2)\sigma)/(\ell+1)$, the numerator
  of this expression equals
  \[
  \ell - (\rho_{\ell}-2)\sigma + (\rho_{\ell}-\ell-2)\sigma = \ell(1-\sigma)\,.
  \]
  Plugging this into \eqref{eqn:2 minus tau sigma prime} we conclude that
  \[
  2- \tau(\sigma') = \frac{1-\sigma}{\tau(\sigma)}\,,
  \]
  which is a simple rearrangement of the desired conclusion.
\end{proof}

We are now ready to prove Lemma~\ref{lemma:tau leq F}.

\begin{replemma}{lemma:tau leq F}
  For every $\sigma \in (0, 1]$ it holds that $F(\sigma) = \tau(\sigma)$.
\end{replemma}

\begin{proof}[Proof of Lemma~\ref{lemma:tau leq F}]
  The proof is by induction on the value of $\ell$ such that $1/\rho_{\ell+1} <
  \sigma \le 1/\rho_{\ell}$.  The base case, $\sigma \ge 1/4$ (that is, $\ell \leq 1$) is
  clear from the definitions.

  For the induction step, fix some value of $\ell \ge 2$, and assume
  that $F(\sigma') = \tau(\sigma')$ for all $\sigma' > 1/\rho_{\ell}$.  
  We need to show that for any $\sigma$ in the
  interval $[1/\rho_{\ell+1}, 1/\rho_{\ell})$, it holds that $F(\sigma)
    = \tau(\sigma)$. To this end, we set $\alpha = 1-\tau(\sigma)$ and $\beta = 1-\tau(\sigma)-\sigma$, 
    and show that the two options in the $\max$ in
    \eqref{eqn:F defn} are bounded by $\tau(\sigma)$, one with equality.

    Consider first the second option. Set $\sigma' = \sigma/(1-\alpha) = \sigma/\tau(\sigma)$. 
    By Proposition~\ref{prop:mu-prime}, we have $\sigma' > 1/\rho_{\ell}$.
    Thus, by the induction hypothesis we have $F(\sigma') =
    \tau(\sigma')$, and hence the second option in \eqref{eqn:F defn} equals
    \[
    \beta + (1-\alpha)\tau(\sigma') = 1-\tau(\sigma)-\sigma + \tau(\sigma)\tau(\sigma/\tau(\sigma)) = \tau(\sigma)\,,
    \]
    where the last step is an application of Proposition~\ref{prop:tau recurrence}.

    Consider then the first option.  Let $\sigma'' =
    \sigma/\alpha$ be the value passed to $F$ in this branch.  It is
    easy to check that $\sigma'' \ge \sigma' > 1/\rho_{\ell}$. So the induction
    hypothesis applies, and we get an upper bound of
    \[
    \beta + \alpha \tau(\sigma'') < \beta + (1-\alpha)\tau(\sigma') \le \tau(\sigma)\,.
    \]
    The first step uses $\tau(\sigma) \ge 1/2$ (yielding $\alpha < 1/2$) and
    the monotonicity of $\tau$, and the last step uses the bound on the second option.
\end{proof}

\section{Choice of Modulus and Analysis of Correctness}
\label{sec:correctness}

In this section we prove Theorem~\ref{thm:no critical bailouts}, giving the
correctness of the dissection algorithm.

\subsection{The dissection tree.}
Now that we have the choice of $\alpha$ and $\beta$ in
Algorithm~\ref{alg:main}, we can look more closely at the recursive
structure of the algorithm.  To this end, we make the following
definition.

\newcommand{\dissecttree}{\mathcal{DT}}

\begin{definition}[Dissection tree]
  For $\sigma \in (0,1]$, the \emph{dissection tree}
  $\dissecttree(\sigma)$ is the ordered binary tree defined as follows.
  If $\sigma \ge 1/4$ then $\dissecttree(\sigma)$ is a single node.
  Otherwise, let $\alpha = 1 - \tau(\sigma)$.  The left child of
  $\dissecttree(\sigma)$ is $\dissecttree(\sigma/\alpha)$, and the right
  child of $\dissecttree(\sigma)$ is $\dissecttree(\sigma/(1-\alpha))$.
\end{definition}

\begin{figure}
\begin{center}
\includegraphics[width=\linewidth]{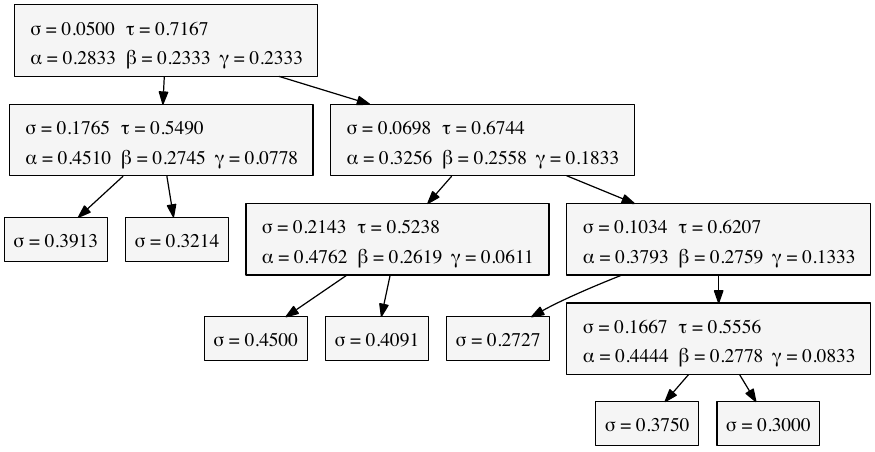}
\caption{The dissection tree $\dissecttree(0.05)$. For each internal node $v$, we display the parameters $\sigma_v, \tau_v = \tau(\sigma_v), \alpha_v, \beta_v, \gamma_v$ as defined in Section~\ref{sec:correctness}.}
\label{fig:dissecttree-illustration}
\end{center}
\end{figure}

Figure~\ref{fig:dissecttree-illustration} shows
$\dissecttree(0.05)$.  The dissection tree captures the essence of the
recursive behaviour of the dissection algorithm when being run with
parameter $\sigma$.  The actual recursion tree of the dissection
algorithm is huge due to the for-loop over $s'$ in line
\ref{step:s-loop}, but if we consider a fixed choice of $s'$ in every
recursive step then the recursion tree of the algorithm becomes
identical to the corresponding dissection tree.

\begin{lemma}
  \label{lemma:dissection depth}
  The recursion depth of Algorithm~\ref{alg:main} is the height of
  $\dissecttree(\sigma)$. In particular, the recursion depth is a
  constant that depends only on $\sigma$.
\end{lemma}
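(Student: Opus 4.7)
The plan is to decompose the lemma into its two assertions and handle them in order: first, identify the recursion depth with the height of $\dissecttree(\sigma)$ purely structurally; second, bound that height by a constant depending only on $\sigma$ using the properties of $\tau$ already established in Section~\ref{sec:correctness}.

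For the first assertion I would simply compare the recursive structure of Algorithm~\ref{alg:main} with the definition of $\dissecttree(\sigma)$. When invoked with parameter $\sigma < 1/4$, the algorithm sets $\alpha = 1-\tau(\sigma)$ and makes recursive calls on $\alpha n$ items with parameter $\sigma/\alpha$ (line~\ref{step:firstrec}) and on $(1-\alpha)n$ items with parameter $\sigma/(1-\alpha)$ (line~\ref{step:secondrec}); when $\sigma \ge 1/4$ it stops and hands off to Schroeppel--Shamir. These recursive branching rules and base case are exactly those in the definition of $\dissecttree(\sigma)$. The outer for-loop over $s'$ on line~\ref{step:s-loop} and the bailout on line~\ref{step:bailout} change how many times each branch is triggered, but not the shape of any single root-to-leaf path, so the recursion depth coincides with the height of $\dissecttree(\sigma)$.

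For the second assertion I would track the integer index $\ell = \ell(\sigma) \ge 1$ defined by $1/\rho_{\ell+1} < \sigma \le 1/\rho_{\ell}$ (and the base case $\sigma \ge 1/4 = 1/\rho_2$ corresponds precisely to $\ell \le 1$). The claim reduces to showing that at every internal node of $\dissecttree(\sigma)$ the index $\ell$ strictly decreases along \emph{both} children, so that a leaf is reached after at most $\ell(\sigma_{\mathrm{root}}) - 1$ levels. For the right child with parameter $\sigma' = \sigma/\tau(\sigma)$, Proposition~\ref{prop:mu-prime} gives exactly what is needed: the map $\sigma \mapsto \sigma/\tau(\sigma)$ is increasing and it sends the left endpoint $1/\rho_{\ell+1}$ to $1/\rho_{\ell}$; thus $\sigma > 1/\rho_{\ell+1}$ forces $\sigma' > 1/\rho_{\ell}$, so $\ell(\sigma') \le \ell - 1$. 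For the left child with parameter $\sigma'' = \sigma/(1-\tau(\sigma))$, note that past the base case we have $\sigma < 1/4 \le 1/2$, so $\tau(\sigma) \ge 1/2 \ge 1-\tau(\sigma)$, giving $\sigma'' \ge \sigma'$ and hence $\ell(\sigma'') \le \ell(\sigma') \le \ell - 1$.

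Combining the two steps, the height of $\dissecttree(\sigma)$ is at most $\ell(\sigma)-1$, a quantity determined entirely by $\sigma$, and this is by the first step the recursion depth of Algorithm~\ref{alg:main}. I do not anticipate a real obstacle here; the only care point is lining up the cutoff of the algorithm ($\sigma \ge 1/4$) with the base case of the recurrence on $\ell$ ($\ell \le 1$), which works out because $\rho_2 = 4$, and verifying that Proposition~\ref{prop:mu-prime} applies at the interval boundary—both are immediate.
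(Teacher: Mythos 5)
Your proof is correct. The paper states this lemma without any explicit proof, treating both claims as immediate from the definitions, so there is no written argument to compare against directly; but your two-step decomposition — matching the algorithm's recursion shape to $\dissecttree(\sigma)$, then bounding the tree height by showing the index $\ell$ strictly decreases along both children via Proposition~\ref{prop:mu-prime} — is exactly the $\ell$-based induction the paper uses explicitly in its proof of Lemma~\ref{lemma:tau leq F}, so it is the natural way to fill in the gap. One cosmetic remark: at $\sigma = 1/4$ exactly the definition gives $\ell = 2$, not $\ell \le 1$ as you write, but this is harmless since the algorithm's base case fires for all $\sigma \ge 1/4$ regardless, your bound $\text{height} \le \ell(\sigma)-1$ remains valid, and the paper itself makes the same identification in the base case of Lemma~\ref{lemma:tau leq F}.
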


We now describe how to choose
{\em a priori} a random $M$ that is ``sufficiently divisible'' for 
the algorithm's desires, and to show correctness of the algorithm.

Fix a choice of the top-level value $\sigma\in(0,1]$. 
Consider the corresponding dissection tree $\dissecttree(\sigma)$.  
For each node $v$ of $\dissecttree(\sigma)$, write $\sigma_v$ for 
the associated $\sigma$ value.  For an internal node $v$ let us also 
define 
$\alpha_v = 1-\tau(\sigma_v)$ and $\beta_v = 1 - \sigma_v - \tau(\sigma_v)$.  
In other words, if $v_1$ and $v_2$ are the two child nodes 
of $v$, then $\sigma_{v_1} = \sigma_v/\alpha_v$ and 
$\sigma_{v_2} = \sigma_v/(1-\alpha_v)$.
Finally, define $\gamma_v = \beta_v \cdot \sigma / \sigma_v$.

Observe that each recursive call made by Algorithm~\ref{alg:main}
is associated with a unique internal node $v$ of the dissection tree
$\dissecttree(\sigma)$.

\begin{lemma}
  Each recursive call associated with an internal node $v$ 
  requires a factor $M'$ of magnitude $\Theta^*(2^{\gamma_v n})$.
\end{lemma}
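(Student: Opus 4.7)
The plan is to reduce the lemma to a single invariant about how the sub-instance size and space parameter evolve down the dissection tree. Let $n_v$ denote the number of items in the sub-instance handled at node $v$, so that $n_{\mathrm{root}} = n$. By inspection of Algorithm~\ref{alg:main}, whenever the algorithm is called on an instance of size $n_v$ with parameter $\sigma_v$ (and $\sigma_v < 1/4$, so $v$ is internal), it selects $M'$ of magnitude $\Theta(2^{\beta_v n_v})$, where $\beta_v = 1 - \sigma_v - \tau(\sigma_v)$. So the entire content of the lemma is the identity $\beta_v n_v = \gamma_v n$, i.e.\ $\sigma_v n_v = \sigma n$.

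First I would prove the invariant $\sigma_v n_v = \sigma n$ for every node $v$ of $\dissecttree(\sigma)$ by a straightforward induction on the depth of $v$. The base case is the root, where $\sigma_v = \sigma$ and $n_v = n$. For the inductive step, if $v_1$ and $v_2$ are the left and right children of $v$, then by the definitions given just before the lemma together with how the algorithm passes parameters to its two recursive calls, we have $n_{v_1} = \alpha_v n_v$ and $\sigma_{v_1} = \sigma_v/\alpha_v$, whence $\sigma_{v_1} n_{v_1} = \sigma_v n_v$; the computation for the right child using $(1-\alpha_v)$ is identical. Composing with the inductive hypothesis gives $\sigma_{v_1} n_{v_1} = \sigma_{v_2} n_{v_2} = \sigma n$.

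Once the invariant is established, the rest is a substitution. Since $M'$ at node $v$ has magnitude $\Theta^*(2^{\beta_v n_v})$, and
\[
\beta_v n_v = \beta_v \cdot \frac{\sigma}{\sigma_v} \cdot n = \gamma_v n,
\]
we conclude $M' = \Theta^*(2^{\gamma_v n})$ as claimed. The $\Theta^*$ (rather than $\Theta$) accounts for the polynomial slack we have in choosing a divisor of $M$ of a prescribed magnitude; this slack will be exactly what is later exploited in Section~\ref{sec:correctness} to arrange for $M$ to have sufficiently many factors of the required sizes.

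There is no real obstacle here: the argument is purely bookkeeping, and the only point that requires any care is to verify that the recursive calls in lines~\ref{step:firstrec} and~\ref{step:secondrec} really do pass $\alpha n$ items with parameter $\sigma/\alpha$ (and symmetrically for the right branch), which matches the recursive definition of $\dissecttree(\sigma)$ and of $\sigma_v$. Everything else follows from unwinding definitions.
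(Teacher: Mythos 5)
Your proof is correct and matches the paper's argument in substance. The paper's one-line proof ("telescope the product of ratios $\sigma_p/\sigma_u$ along the path to the root, noting each ratio equals the factor by which $n$ decreases") is exactly your invariant $\sigma_v n_v = \sigma n$, established by induction rather than telescoping; the subsequent substitution $\beta_v n_v = \beta_v(\sigma/\sigma_v)n = \gamma_v n$ is the same in both.
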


\begin{proof}
  Telescope a product of the ratio $\sigma_p / \sigma_u$ 
  for a node $u$ and its parent $p$ along the path from $v$ 
  to the root node. Each such $\sigma_p/\sigma_u$ is
  either $\alpha_u$ or $1-\alpha_u$ depending on whether it is 
  a left branch or right branch---precisely the factor by which $n$
  decreases.
\end{proof}

\subsection{Choosing the modulus.}
The following lemma contains the algorithm that chooses the random modulus.

\begin{lemma}
  \label{lem:m-prime}
  For every $\sigma\in(0,1]$  
  there exists a randomized algorithm that, given integers
  $n$ and $b = O(n)$ as input,
  runs in time polynomial in $n$ and outputs for each internal node $v
  \in \dissecttree(\sigma)$ random moduli $M_v$, $M_v'$ such that, for the
  root node $r \in \dissecttree(\sigma)$, $M_r \ge 2^b$,
  and furthermore for every internal node $v$:
  \begin{enumerate}
  \item $M_v'$ is of magnitude $\Theta(2^{\gamma_v n})$,
    \label{lem:m-prime:item1}
  \item $M_v = M_p'$, where $p$ is the parent of $v$,
    \label{lem:m-prime:item2}
  \item $M_v'$ divides $M_v$, and
    \label{lem:m-prime:item3}
  \item for any fixed integer $1 \le Z \le 2^b$, the probability that $M_v'$
    divides $Z$ is $O^*(1/M_v')$.
    \label{lem:m-prime:item4}
  \end{enumerate}
\end{lemma}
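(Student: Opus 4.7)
My plan is to realize each $M_v'$ as a product of a bounded number of random primes drawn from a pool $\mathcal{P}$ of primes of a single size class $\Theta(2^B)$, and to enforce the divisibility chain $M_{v_i}' \mid M_v'$ by selecting the prime multisets as a nested sequence of uniform random subsets, one per internal node.

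I first invoke Lemma~\ref{lemma:dissection depth} to conclude that the values $\{\gamma_v\}$ over internal nodes of $\dissecttree(\sigma)$ form a finite set of positive reals depending only on $\sigma$, bounded below by some $\gamma_{\min}>0$. This permits the choice of a constant $k_0$ and $B=\Theta(n)$ such that $k_v := \lceil \gamma_v n/B\rceil$ is a positive integer constant for every internal node $v$. Let $\mathcal{P}$ denote the set of primes in $[2^B, 2^{B+1}]$; by the prime number theorem $|\mathcal{P}| = N = \Theta(2^B/B) = 2^{\Omega(n)}$, which is abundantly large. The construction then proceeds top-down: set $K = \max(\lceil b/B\rceil, k_r)$, draw $T_r\subseteq\mathcal{P}$ as a uniform random size-$K$ subset, and set $M_r=\prod_{p\in T_r}p \ge 2^{KB}\ge 2^b$; then for each internal node $v$ visited top-down, given $T_v$ (equal to $T_p'$ from the parent $p$, or to $T_r$ for the root), pick $T_v'\subseteq T_v$ uniformly at random of size $k_v$ and set $M_v'=\prod_{p\in T_v'}p$, passing $T_v'$ to each child of $v$. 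Properties~\ref{lem:m-prime:item2} and~\ref{lem:m-prime:item3} are immediate, and Property~\ref{lem:m-prime:item1} follows from $M_v'\in[2^{k_v B},2^{k_v(B+1)}]$ together with $k_v B=\Theta(\gamma_v n)$.

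For Property~\ref{lem:m-prime:item4}, a standard composition argument for iterated uniform random subset selection shows that the marginal distribution of $T_v'$ is uniform over all size-$k_v$ subsets of $\mathcal{P}$. Fixing $1\le Z\le 2^b$ and letting $Q=\{p\in\mathcal{P}:p\mid Z\}$, we have $|Q|\le\log_2 Z/B\le b/B$, and hence
\[
\Pr[M_v'\mid Z]=\Pr[T_v'\subseteq Q]=\binom{|Q|}{k_v}\Big/\binom{N}{k_v}\le\left(\frac{|Q|}{N}\right)^{k_v} = O\!\left(\frac{b^{k_v}}{M_v'}\right),
\]
using $|Q|/N=O(b/2^B)$ and $2^{k_v B}=\Theta(M_v')$. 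Since $k_v=O(1)$ and $b=O(n)$, the factor $b^{k_v}$ is polynomial in $n$, yielding the promised bound $O^*(1/M_v')$.

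The main obstacle is the tight interplay between the prime-size parameter $B$, the multiplicity $k_v$, and the rounding accuracy $k_v B\approx\gamma_v n$. Making $b^{k_v}$ polynomial in Property~\ref{lem:m-prime:item4} forces $k_v=O(1)$ and hence $B=\Theta(n)$; but such a coarse $B$ means that a naive $\lceil\gamma_v n/B\rceil$ can shift $\log_2 M_v'$ by $\Omega(n)$ and spoil Property~\ref{lem:m-prime:item1}. The delicate step is to pick $B$ in tune with the finite set of $\gamma_v$'s---for instance, by choosing $k_0$ to serve as a common refinement of the ratios $\gamma_v/\gamma_{\min}$---or, more robustly, by allowing a small constant number of distinct prime size classes so that every $M_v'$ can be assembled to the correct magnitude up to a constant factor.
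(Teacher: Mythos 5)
The heart of your proposal --- a single pool $\mathcal{P}$ of primes all of one size class $\Theta(2^B)$ with $B=\Theta(n)$, and $M_v'$ a product of $k_v=\lceil \gamma_v n/B\rceil=O(1)$ of them drawn as nested random subsets --- cannot deliver Property~\ref{lem:m-prime:item1}, and you correctly flag this yourself at the end. With $B=\Theta(n)$ and $k_v\in\mathbb{Z}$, the quantity $k_v B$ can overshoot $\gamma_v n$ by anything up to $B-1=\Theta(n)$, so the resulting $M_v'$ can be off from $2^{\gamma_v n}$ by a multiplicative factor $2^{\Theta(n)}$, not the required $\Theta(1)$. The two escape hatches you sketch do not close the gap. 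A ``common refinement'' of the ratios $\gamma_v/\gamma_{\min}$ requires these ratios to be commensurable; for general real $\sigma$ the $\gamma_v$'s are values of a piecewise-affine function and their ratios need not be rational, so no finite refinement exists. The second hatch --- several distinct prime size classes tuned so each $M_v'$ lands at the right magnitude --- is left unrealized, and when carried out in the natural way it essentially becomes the paper's construction: sort the distinct $\gamma_v$'s as $0<\lambda_1<\cdots<\lambda_k$, set $\delta_i=\lambda_i-\lambda_{i-1}$, draw one independent random prime $p_i\in[2^{\delta_i n},2\cdot 2^{\delta_i n}]$ per gap, and take $M_v'=\prod_{i\le j}p_i$ for the $j$ with $\gamma_v=\lambda_j$. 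The telescoping then gives $M_v'=\Theta(2^{\gamma_v n})$ exactly (up to $2^{O(1)}$), the divisibility chain is automatic since $\gamma_v$ decreases down the tree, and Property~\ref{lem:m-prime:item4} follows as in your argument from the Prime Number Theorem and independence across the $k=O(1)$ primes.

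Your Properties~\ref{lem:m-prime:item2}--\ref{lem:m-prime:item3} via nested subsets and your marginal-uniformity and hypergeometric bound for Property~\ref{lem:m-prime:item4} are fine as far as they go, but they live inside a construction that is wrong at step one, so the proof as written does not stand.
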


\begin{proof}
  Let $0 < \lambda_1 < \lambda_2 < \cdots < \lambda_k$ be the set of
  distinct values of $\gamma_v$ ordered by value, and let $\delta_i =
  \lambda_i - \lambda_{i-1}$ be their successive differences (where we
  set $\lambda_0 = 0$ so that $\delta_1 = \lambda_1$).
  Since $\dissecttree(\sigma)$ depends only $\sigma$ and not on $n$,
  we have $k=O(1)$. For each $1 \le i \le k$ independently, let $p_i$ be a 
  uniform random prime from the interval 
  $[2^{\delta_in},2 \cdot 2^{\delta_in}]$.

  For a node $v$ such that $\gamma_v = \lambda_j$, let $M_v' =
  \prod_{i=1}^j p_j$.  Condition \ref{lem:m-prime:item1} then holds by
  construction.
  The values of $M_v$ are the determined for all nodes except the root
  through condition \ref{lem:m-prime:item2}; for the root node $r$ we
  set $M_r = p_0 M_r'$, where $p_0$ is a random prime of magnitude
  $2^{\Theta(n)}$ to make sure that $M_r \ge 2^b$.

  To prove condition \ref{lem:m-prime:item3} note that for any node
  $v$ with parent $p$, we need to prove that $M_v'$ divides $M_p'$.
  Let $j_v$ be such that $\lambda_{j_v} = \gamma_v$ and $j_p$ such
  that $\lambda_{j_p} = \gamma_p$.  Noting that the value of
  $\gamma_v$ decreases as one goes down the dissection tree, it then
  holds that $j_v < j_p$, from which it follows that $M_v' =
  \prod_{i=1}^{j_v} p_i$ divides $M_p' = \prod_{i=1}^{j_p} p_i$.

  Finally, for condition \ref{lem:m-prime:item4}, again let $j$ be
  such that $\lambda_j = \gamma_v$, and observe that in order for $Z$
  to divide $M_v'$ it must have all the factors $p_1, p_2, \ldots, p_j$.
  For each $1 \le i \le j$, $Z$ can have at most $\frac{\log_2
    Z}{\delta_i n} = O(1)$ different factors between $2^{\delta_i n}$
  and $2 \cdot 2^{\delta_i n}$, so by the Prime Number Theorem, the
  probability that $p_i$ divides $Z$ is at most $O(n 2^{-\delta_i
    n})$.  As the $p_i$'s are chosen independently the probability
  that $Z$ divides all of $p_1, p_2, \ldots, p_j$ (that is, $M_v'$) is 
  $O(n^j 2^{-(\delta_1 + \delta_2 + \ldots + \delta_j) n})
  = O(n^k 2^{-\gamma_v n}) = O^*(1/M_v')$, as desired.
\end{proof}

\subsection{Proof of correctness.} 
We are now ready to prove the correctness of the entire algorithm,
assuming preprocessing and isolation has been carried out.

\begin{reptheorem}{thm:no critical bailouts}
  For every $\sigma\in(0,1]$  
  there is a randomized algorithm that runs in time polynomial
  in $n$ and chooses a top-level modulus $M\geq 2^n$ so that 
  Algorithm~\ref{alg:main} reports a solution of the non-modular
  instance $(\vec{a},t)$ with high probability over the choices of $M$, 
  assuming that at least one and at most $O(1)$ solutions exist
  and that $\log t = O(n)$.
\end{reptheorem}

\begin{proof}
  The modulus $M$ is chosen using Lemma~\ref{lem:m-prime}, with $b$
  set to $\max\{n, \log nt\} = \Theta(n)$.  Specifically, it is chosen
  as $M_r$ for the root node $r$ of $\dissecttree(\sigma)$.

  Fix a solution $\vec{x}^*$ of $(\vec{a}, t)$, that is, $\sum_{i=1}^n
  a_ix^*_i=t$.  (Note that this is an equality over the integers and
  not a modular congruence.)  By assumption such an $\vec{x}^*$ exists
  and there are at most $O(1)$ choices. 

  If $\sigma\geq 1/2$, the top level recursive call executes the 
  Schroeppel--Shamir algorithm and a solution will be discovered. 
  So suppose that $\sigma\in(0,1/4)$.

  For an internal node $v \in \dissecttree(\sigma)$ consider 
  a recursive call associated with $v$, and let $L_v \subseteq [n]$ 
  (resp.~$R_v\subseteq [n]$) be the set of $\alpha_v n_v$
  (resp.~$(1-\alpha_v)n_v$) indices of the items that are passed to
  the left (resp.~right) recursive subtree of $v$. Note that these
  indices are with respect to the top-level instance, and that they do
  not depend on the choices of $s'$ made in the recursive calls. Let
  $s'_v \in \{0, \ldots, M'_v\}$ be the choice of $s'$ that could lead
  to the discovery of $\vec{x}^*$, in other words $s'_v = \sum_{i \in L_v}
  a_i x_i^* \bmod M_v'$. Let $I_v=L_v\cup R_v$. 

  For a leaf node $v\in\dissecttree(\sigma)$ and its parent $p$, 
  define $I_v=L_p$ if $v$ is a left child of $p$, and $I_v=R_p$ 
  if $v$ is a right child of $p$.
  
  We now restrict our attention to the part of the recursion tree
  associated with the discovery of $\vec{x}^*$, or in other words, the
  recursion tree obtained by fixing the value of $s'$ to $s'_v$ in
  each recursive step, rather than trying all possibilities.  This
  restricted recursion tree is simply $\dissecttree(\sigma)$.  Thus
  the set of items $\vec{a}_v = (a_i)_{i \in I_v}$ and the target
  $t_v$ associated with $v$ is well-defined for all $v\in\dissecttree(\sigma)$.
  
  Denote by $B(v)$ the event that $(\vec{a}_v, t_v, M_v)$ has more
  than $O^*(2^{n_v}/M_v)$ solutions. 
  Clearly, if $B(v)$ does not happen then there can not be a bailout
  at node $v$.\footnote{The converse is not true though: it can be that $B(v)$
  happens but a bailout happens in one (or both) of the two subtrees of $v$,
  causing the recursive call associated with node $v$ to not find all the solutions to $(\vec{a}_v, t_v, M_v)$ and thereby not bail out.}
  We will show that $\cup_{v\in \dissecttree(\sigma)}B(v)$ happens 
  with probability $o(1)$ over the choices of $\{M_v, M_v'\}$ from Lemma~\ref{lem:m-prime}, which thus implies that $\vec{x}^*$ is 
  discovered with probability $1-o(1)$. Because $\dissecttree(\sigma)$
  has $O(1)$ nodes, by the union bound it suffices to show 
  that $\Pr[B(v)] = o(1)$ for every $v \in \dissecttree(\sigma)$.

  Consider an arbitrary node $v\in\dissecttree(\sigma)$.  
  There are two types of solutions $\vec{x}_v$ of the instance 
  $(\vec{a}_v, t_v, M_v)$ associated with $v$.

  First, a vector $\vec{x}_v \in\{0,1\}^{n_v}$ is a solution if
  $\sum_{i=1}^{n_v} a_{v,i}x_{v,i}= \sum_{i \in I_v}
  a_ix^*_i$.  (Note that this is an equality over the integers, not a
  modular congruence.)  Because there are at most $O(1)$ 
  solutions to the top-level instance, there are at most $O(1)$ 
  such vectors $\vec{x}_v$.  Indeed, otherwise we would have more
  than $O(1)$ solutions of the top level instance, a contradiction.

  Second, consider a vector $\vec{x}_v \in\{0,1\}^{n_v}$ such
  that $\sum_{i=1}^{n_v} a_{v,i}x_{v,i} \neq \sum_{i\in I_v}
  a_ix^*_i$ (over the integers).  Let $Z=|\sum_{i=1}^{n_v}
  a_{v,i}x_{v,i}-\sum_{i\in I_v} a_ix^*_i|\neq 0$.  Such 
  a vector $\vec{x}_v$ is a solution of $(\vec{a}_v, t_v, M_v)$ 
  only if $M_v$ divides $Z$. Since $\log
  t=O(n)$ and $1\leq Z\leq nt$, by Lemma~\ref{lem:m-prime},
  item~\ref{lem:m-prime:item4} we have that $Z$ is divisible by
  $M_v$ with probability $O^*(1/M_v)$.

  From the two cases it follows that the expected number of solutions
  $\vec{x}_v$ of $(\vec{a}_v, t_v, M_v)$ is $E = O^*(2^{n_v}/M_v)$.
  (We remark that the degree in the suppressed polynomial depends on
  $\sigma$ but not on $n$.)
  Setting the precise bailout threshold to $n \cdot E$, we then have
  by Markov's inequality that  
  $\Pr[B(v)] = \Pr[\text{\#solutions $\vec{x}_v$} > n E] < 1/n = o(1)$,
  as desired. Since $v$ was arbitrary, we are done.
\end{proof}

\section{Preprocessing and Isolation}
\label{sec:isolation}

This section proves Theorem~\ref{thm:preprocessing} using standard
isolation techniques.

\begin{reptheorem}{thm:preprocessing}
  There is a polynomial-time randomized algorithm for preprocessing instances of \subsetsum{} which, given as input an instance $(\vec{a}, t)$ with $n$ elements, outputs a collection of $O(n^3)$ instances $(\vec{a}', t')$, each with $n$ elements and $\log t' = O(n)$,
  such that if $(\vec{a}, t)$ is a NO instance then so are all the 
  new instances with probability $1-o(1)$, 
  and if $(\vec{a}, t)$ is a YES instance then with 
  probability $\Omega(1)$ at least one of the new instances is 
  a YES instance with at most $O(1)$ solutions.
\end{reptheorem}

\begin{proof}
We carry out the preprocessing 
in two stages. Each stage considers its input instances
$(\vec{a}, t)$ one at a time and produces one or more instances 
$(\vec{a}',t')$ for the next stage, the output of the second stage 
being the output of the procedure. 

The first stage takes as input the instance $(\vec{a}, t)$ given
as input to the algorithm.
Without loss of generality we may assume that 
$(\vec{a},t)$ satisfies $a_i\leq t$ for all $i=1,2,\ldots,n$.
Indeed, we may simply remove all elements $i$ with $a_i>t$. 
Hence $0\leq \sum_{i=1}^n a_i x_i\leq nt$ for all $\vec{x}\in\{0,1\}^n$.
A further immediate observation is that we may assume that 
$\log nt\leq 2^n$. Indeed, otherwise we can do an exhaustive search 
over all the $2^n$ subsets of the input integers in polynomial 
time in the input size (and then output a trivial YES or NO instance 
based on the outcome without proceeding to the second stage).
Next, select a uniform random prime $P$ with, say, $3n+1$ bits. 
For each $k=0,1,2,\ldots,n-1$, form one instance $(\vec{a}', t')$
by setting $t'=t\bmod P+kP$ and
$a_i'=a_i\bmod P$ for $i=1,2,\ldots,n$. Observe that every solution 
of $(\vec{a},t)$ is a solution of $(\vec{a}',t')$ for at least
one value of $k$. We claim that with high probability
each of the $n$ instances $(\vec{a}',t')$ has no other
solutions beyond the solutions of $(\vec{a},t)$.

Consider an arbitrary vector
$\vec{x}\in\{0,1\}^n$ that is not a solution of $(\vec{a},t)$
but is a solution of $(\vec{a}',t')$. This happens 
only if $P$ divides $Z=|t-\sum_{i=1}^n a_ix_i|\neq 0$.
Let us analyze the probability for the event that $P$ divides $Z$.
Since $Z\leq nt$ has at most $2^n$ bits (recall that $\log nt\leq 2^n$), 
there can be at most $2^n/(3n)$ primes with $3n+1$ bits that divide $Z$.
By the Prime Number Theorem we know
that there are $\Omega(2^{3n+1}/n)$ primes with $3n+1$ bits.
Since $P$ is a uniform random prime with $3n+1$ bits, 
we have that $P$ divides $Z$ with probability $O(2^{-2n}n^2)$.
By linearity of expectation, the expected number of vectors 
$\vec{x}\in\{0,1\}^n$ that are not solutions of $(\vec{a},t)$ 
but are solutions of $(\vec{a}',t')$ is thus $O(2^{-n}n^2)$.
By an application of Markov's inequality and the union bound,
with probability $1-o(1)$
each of the $n$ instances $(\vec{a}',t')$ has no other
solutions beyond the solutions of $(\vec{a},t)$.
By construction, $\log t'=O(n)$. This completes the first stage. 

The second stage controls the number of solutions by a standard
isolation technique.
Consider an instance $(\vec{a},t)$ input to the second stage.
Assume that the set of all solutions $S\subseteq\{0,1\}^n$ of $(\vec{a},t)$ 
is nonempty and guess that it has size in the range 
$2^s\leq |S|\leq 2^{s+1}$ for $s=0,1,\ldots,n-1$. 
(That is, we try out all values and at least one will be the correct guess.)
Select (arbitrarily) a prime $P$ in the interval $2^s\leq P\leq 2^{s+1}$.
Select $r_1,r_2,\ldots,r_n$ and $u$ independently and uniformly at
random from $\{0,1,\ldots,P-1\}$. 

For any fixed $\vec{x}\in S$, we have that 
\begin{equation}
\label{eq:isolation-mod-p}
\sum_{i=1}^n r_ix_i\equiv u\pmod P
\end{equation}
holds with probability $1/P$ over the random choices 
of $r_1,r_2,\ldots,r_n,u$. Similarly, any distinct 
$\vec{x},\vec{x'}\in S$ both satisfy \eqref{eq:isolation-mod-p} 
with probability $1/P^2$.

Fix a correct guess of $s$, so that $1 \le |S|/P \le 2$, and let the
random variable $S_P$ be the number of solutions in $S$ that also
satisfy \eqref{eq:isolation-mod-p}.  Letting $\lambda = |S|/P$ we then have
\begin{align*}
\E[S_P] &= \lambda & \text{and} && \E[S_P^2] = \E[S_P] + \frac{|S|(|S|-1)}{P^2} < \lambda + \lambda^2,
\end{align*}
so the first and second moment methods give
\begin{align*}
  \Pr[S_P > 10] & < \frac{\E[S_P]}{10} = \frac{\lambda}{10} < 1/5\quad \textrm{and} \\
  \Pr[S_P > 0] & > \frac{\E[S_P]^2}{\E[S_P^2]} > \frac{1}{1+\lambda} > 1/2\,.
\end{align*}
By a union bound, we have that for this correct
guess of $s$ at least $1$ and at most $10$ of the solutions in $S$
satisfy \eqref{eq:isolation-mod-p} with probability at least $1/4$.

Let $\vec{x}\in S$ satisfy \eqref{eq:isolation-mod-p}.  Then, there
exists a $k=0,1,\ldots,n-1$ such that $\sum_{i=1}^n r_ix_i=u+Pk$.
(Note that this is equality over the integers, not a modular
congruence!)  Again we can guess this value $k$ by iterating over all $n$
possibilities.  
Put $a_i'=a_i+(nt+1)r_i$ for $i=1,2,\ldots,n$ 
and $t'=t+(nt+1)(u+Pk)$. 

Now observe that if $S$ is empty, then none of
the $n^2$ instances $(\vec{a}',t')$ has solutions with probability 1. 
Conversely, if $S$ is nonempty, then at least one of the 
instances $(\vec{a}',t')$ has at least 1 and at most 10 solutions with 
probability at least 1/4. By construction, $\log t'=O(n)$.
Since the first stage gives $n$ outputs, the second stage
gives $n^3$ outputs in total.
\end{proof}

\section{Parallelization}
\label{sec:parallel}

In this section we prove Theorem~\ref{thm:parallel dissect}, restated
here for convenience.

\begin{reptheorem}{thm:parallel dissect}
  The algorithm of Theorem~\ref{thm:main} can be implemented to run in
  $O^*(2^{\tau(\sigma) n} / P)$ parallel time on $P$ processors each
  using $O^*(2^{\sigma n})$ space, provided $P \le
  2^{(2\tau(\sigma)-1)n}$.
\end{reptheorem}

\begin{proof}[Proof]
  We divide the $P$ processors evenly among the roughly $2^{\beta n}$
  choices of $s'$ in line~\ref{step:s-loop}. If $P \le 2^{\beta n}$,
  then this trivially gives full parallelization. Otherwise, fix a
  choice of $s'$. We have $P' \approx P/2^{\beta n}$ processors
  available to solve the instance restricted to this value of $s'$.

  We now let each of the $P'$ available processors solve the left 
  recursive call on line~\ref{step:firstrec} in full, independently of
  each other.  Only in the right  recursive call on
  line~\ref{step:secondrec} do we split up the task and use the $P'$
  processors to get a factor $P'$ speedup, provided that $P'$ is not
  too large (cf. the theorem statement).

  Let us write $\sigma_l$ and $n_l$ (resp.~$\sigma_r$ and $n_r$)
  for the values of $\sigma$ and $n$ on the left (resp.~right) 
  recursive branch.
  The left branch takes time $O^*(2^{\tau(\sigma_l) n_l})$. By
  an inductive argument, if $P' \le 2^{(2\tau(\sigma_r)-1)n_r}$, 
  then the right  branch takes time $O^*(2^{\tau(\sigma_r)n_r}/P')$. 
  Indeed, to set up the induction, observe that in the base case 
  when $\sigma \ge 1/4$, there is nothing to prove, since the bound 
  on $P$ is then simply $1$. The overall time taken is within 
  a constant of the maximum of these because the recursion depth is $O(1)$.

  Thus to complete the proof it suffices to establish the inequalities
  \begin{align}
    \label{eqn:parallel ineq 1}
    \max\big\{2^{\tau(\sigma_l)n_l},\, 2^{\tau(\sigma_r) n_r}/P'\big\} &\le 2^{\tau(\sigma) n}/P\,,\\
    \label{eqn:parallel ineq 2}
    P' &\le 2^{(2\tau(\sigma_r)-1) n_r}\,.
  \end{align}

  Let us start with \eqref{eqn:parallel ineq 1}. 
  For the left branch, we have $n_l = \alpha n
  = (1-\tau(\sigma))n$.  Using the assumption that $P \le
  2^{(2\tau(\sigma)-1)n}$ and the trivial bound $\tau(\sigma_l) \le 1$,
  we see that $2^{\tau(\sigma_l) n_l} \le 2^{\tau(\sigma) n}/P$ as
  desired. For the right branch, we have
  \begin{align*}
    n_r &= (1-\alpha)n = \tau(\sigma) n\,,\\
    \tau(\sigma_r) &= \tau(\sigma/\tau(\sigma)) =
    \frac{2\tau(\sigma)-1+\sigma}{\tau(\sigma)}\,,
  \end{align*}
  where the last step uses Proposition~\ref{prop:tau recurrence}.  
  Thus,
  \[
    \tau(\sigma_r) n_r = (2\tau(\sigma)-1+\sigma)n\,, 
  \]
and hence,
  \begin{equation*}
    2^{\tau(\sigma_r) n_r}/P' = 2^{(2\tau(\sigma)-1+\sigma)n}/(P/2^{(1-\tau(\sigma)-\sigma)n}) = 2^{\tau(\sigma) n}/P\,.
  \end{equation*}
  
  It remains to establish \eqref{eqn:parallel ineq 2}. 
  Because $P \le 2^{(2\tau(\sigma)-1)n}$, it suffices to show that
  \begin{equation*}
    2^{(2\tau(\sigma)-1)n}/2^{(1-\tau(\sigma)-\sigma)n} \le 2^{(2\tau(\sigma_r)-1)n_r} = 2^{(4\tau(\sigma)-2+2\sigma - \tau(\sigma))n}\,.
  \end{equation*}
  Canceling exponents on the left and on the right, everything cancels 
  except for one of the two $\sigma n$'s on the right.
\end{proof}

\bibliographystyle{plain}
\bibliography{paper}

\end{document}